\newtheorem{remark*}{Remark}
\newtheorem{thm}{Theorem}
\newtheorem{cor}{Corollary}
\newcommand{\N}{\mathbb{N}}
\newcounter{finaln}
\newcommand{\integers}[2][2]{\ifthenelse{\equal{#1}{2}}
{\llbracket #2\rrbracket}
{\ifthenelse{\equal{#1}{0}}{
\ifthenelse{\equal{11}{\the\catcode`#2}}
{\{0,\ldots,#2 -1\}}
{\setcounter{finaln}{#2 -1}
\{0,\ldots,\thefinaln \}}}
{\{1,\ldots,#2\}}}}
\renewcommand{\int}[1]{\integers{#1}}
\newcommand{\lcm}{\text{lcm}}
\newcommand{\T}{\mathscr{T}} 
\renewcommand{\O}{\mathscr{O}} 
\newcommand{\C}{\mathscr{C}} 
\newcommand{\G}{\mathscr{G}} 
\DeclareMathOperator{\deltaeq}{\;\overset{\Delta}{=}\;}
\newcommand{\p}{\textsc{par}}
\newcommand{\bip}{\textsc{bip}}
\newcommand{\seq}{\textsc{seq}}
\newcommand{\bs}{\textsc{bs}}
\newcommand{\bp}{\textsc{bp}}
\newcommand{\lc}{\textsc{lc}}
\newcommand{\PreserveBackslash}[1]{\let\temp=\\#1\let\\=\temp}
\newcolumntype{C}[1]{>{\PreserveBackslash\centering}p{#1}}
\newcommand{\Z}{\mathbb{Z}}
\newcommand{\B}{\mathbb{B}}
\newcommand{\Bn}{\mathbb{B}^n}
\renewcommand{\O}{\mathcal{O}}
\newcommand{\BS}[1]{{\text{\textsc{BS}}_{#1}}}
\newcommand{\even}{\mathrm{e}}
\newcommand{\odd}{\mathrm{o}}
\newcommand{\barmu}{\overline{\mu}}
\newcommand{\ie}{i.e.{}}
\title{
  On elementary cellular automata asymptotic (a)synchronism sensitivity 
  and complexity 
}
\titlerunning{
	ECA asymptotic (a)synchronism sensitivity and complexity
}
\author{
  Isabel Donoso Leiva\inst{1,3} \and
  Eric Goles\inst{1} \and
  Mart{\'i}n R{\'i}os-Wilson\inst{1} \and
  Sylvain Sen{\'e}\inst{2,3}
}
\authorrunning{
	I. Donoso Leiva, E. Goles, M. R{\'i}os Wilson, and S. Sen{\'e}
}
\institute{
  Facultad de Ingenier{\'i}a y Ciencias, Universidad Adolfo 
  Ib{\'a}{\~n}ez \and
  Universit{\'e} publique, Marseille, France \and
  Aix Marseille Univ, CNRS, LIS, Marseille, France
}
\begin{document}
\maketitle

\begin{abstract}
  Among the fundamental questions in computer science is that of the impact of 
  synchronism/asynchronism on computations, which has been addressed in 
  various fields of the discipline: in programming, in networking, in 
  concurrence theory, in artificial learning, etc. 
  In this paper, we tackle this question from a standpoint which mixes 
  discrete dynamical system theory and computational complexity, by
  highlighting that the chosen way of making local computations can have a 
  drastic influence on the performed global computation itself.
  To do so, we study how distinct update schedules may fundamentally change 
  the asymptotic behaviors of finite dynamical systems, by analyzing in 
  particular their limit cycle maximal period. 
  For the message itself to be general and impacting enough, we choose to 
  focus on a ``simple'' computational model which prevents underlying systems 
  from having too many intrinsic degrees of freedom, namely elementary 
  cellular automata.
  More precisely, for elementary cellular automata rules which are neither too 
  simple nor too complex (the problem should be meaningless for both), we show 
  that update schedule changes can lead to significant computational 
  complexity jumps (from constant to superpolynomial ones) in terms of their 
  temporal asymptotes.
\end{abstract}

\section{Introduction}

In the domain of discrete dynamical systems at the interface with computer 
science, the generic model of automata networks, initially introduced in 
the 1940s through the seminal works of McCulloch and Pitts on formal neural 
networks~\cite{McCulloch1943} and of Ulam and von Neumann on cellular 
automata~\cite{vonNeumann1966}, has paved the way for numerous fundamental 
developments and results; for instance with the introduction of finite 
automata~\cite{Kleene1956}, the retroaction cycle theorem~\cite{Robert1980}, 
the undecidability of all nontrivial properties of limit sets of cellular 
automata~\cite{Kari1994}, the Turing universality of the model 
itself~\cite{Smith1971,Goles1990}, ... and also with the generalized use of 
Boolean networks as a representational model of biological regulation networks 
since the works of Kauffman~\cite{Kauffman1969} and Thomas~\cite{Thomas1973}.
Informally speaking, automata networks are collections of discrete-state 
entities (the automata) interacting locally with each other over discrete time 
which are simple to define at the static level but whose global dynamical 
behaviors offer very interesting intricacies.

Despite major theoretical contributions having provided since the 1980s a 
better comprehension of these objects~\cite{Robert1986,Goles1990} from 
computational and behavioral standpoints, understanding their sensitivity to 
(a)synchronism remains an open question on which any advance could have deep 
implications in computer science (around the thematics of synchronous versus 
asynchronous computation and processing~\cite{Chapiro1984,Charron1996}) and in 
systems biology (around the temporal organization of genetic 
expression~\cite{Hubner2010,Fierz2019}). 
In this context, numerous studies have been published by considering distinct 
settings of the concept of synchronism/asynchronism, i.e. by defining update 
modes which govern the way automata update their state over time.
For instance, (a)synchronism sensitivity has been studied \emph{per se} 
according to deterministic and non-deterministic semantics 
in~\cite{Goles2008,Aracena2011,Noual2018} for Boolean automata networks and 
in~\cite{Ingerson1984,Fates2005,Dennunzio2013,Dennunzio2017} for cellular 
automata subject to stochastic semantics.

In these lines, the aim of this paper is to increase the knowledge on Boolean 
automata networks and cellular automata (a)synchronism sensitivity. 
To do so, we choose to focus on the impact of different kinds of periodic 
update modes on the dynamics of elementary cellular automata: from the most 
classical parallel one to the more general local clocks 
one~\cite{Pauleve2022}. 
Because we want to exhibit the very power of update modes on dynamical systems 
and concentrate on it, the choice of elementary cellular automata is quite 
natural: 
they constitute a restricted and ``simple'' cellular automata family which is 
well known to have more or less complex representatives in terms of dynamical 
behaviors ~\cite{Wolfram1984,Culik1988,Kurka1997} without being too much 
permissive.
For our study, we use an approach derived from~\cite{Rios2021a} and we 
pay attention to the influence of update modes on the asymptotic dynamical 
behaviors they can lead to compute, in particular in terms of limit cycles 
maximal periods.

In this paper, we highlight formally that the choice of the update mode can 
have a deep influence on the dynamics of systems. 
In particular, two specific elementary cellular automata rules, namely rules 
$156$ and $178$ (a variation of the well-known majority function tie case) as 
defined by the Wolfram's codification, are studied here. 
They have been chosen from the experimental classification presented 
in~\cite{Wolfram1984}, as a result of numerical simulations which have given 
the insight that they are perfect representatives for highlighting 
(a)synchronism sensitivity.
Notably, they all belong to the Wolfram's class II, which means that, 
according to computational observations, these cellular automata evolved 
asymptotically towards a ``set of separated simple stable or periodic 
structures''.
Since our (a)synchronism sensitivity measure consists in limit cycles maximal 
periods, this  Wolfram's class II is naturally the most pertinent one in our 
context. 
Indeed, class I cellular automata converge to homogeneous fixed points, class 
III cellular automata leads to aperiodic or chaotic patterns, and class IV 
cellular automata, which are deeply interesting from the computational 
standpoint, are not relevant for our concern because of their global high 
expressiveness which would prevent from showing asymptotic complexity jumps 
depending on update modes. 
On this basis, for these two rules, we show between which kinds of update 
modes asymptotic complexity changes appear.
What stands out is that each of these rules admits it own (a)synchronism 
sensitivity scheme (which one could call its own asymptotic complexity scheme 
with respect to synchronism), which supports interestingly the existence of a 
periodic update modes expressiveness hierarchy.\smallskip

In Section~\ref{sec:def}, the main definitions and notations are formalized. 
The emphasizing of elementary cellular automata (a)synchronism sensitivity 
is presented in Section~\ref{sec:res} through upper-bounds for the limit-cycle 
periods of rules $156$ and $178$ depending on distinct families of periodic 
update modes. 
The paper ends with Section~\ref{sec:persp} in which we discuss some 
perspectives of this work.
Full proofs can be found in the long version of this paper available \href{}
{here}.

\section{Definitions and notations}
\label{sec:def}

\paragraph{General notations}

Let $\integers{n} = \integers[0]{n}$,
let $\B = \{0,1\}$, and 
let $x_i$ denote the $i$-th component of vector $x \in \B^n$.
Given a vector $x \in \B^n$, we can denote it classically as 
$(x_0, \dots, x_{n-1})$ or as the word $x_0 \dots x_{n-1}$ if it eases the 
reading.

\subsection{Boolean automata networks and elementary cellular automata}

Roughly speaking, a Boolean automata network (BAN)  applied over a grid of 
size $n$ is a collection of $n$ automata represented by the set 
$\integers{n}$, each having a state within $\B$, which interact with each 
other over discrete time.  
A \emph{configuration} $x$ is an element of $\Bn$, \ie a Boolean vector of 
dimension $n$. 
Formally, a \emph{BAN} is a function $f: \Bn \to \Bn$ defined by means of $n$ 
local functions $f_i: \Bn \to \B$, with $i \in \integers{n}$, such that $f_i$ 
is the $i$th component of $f$. 
Given an automaton $i \in \integers{n}$ and a configuration $x \in \Bn$, 
$f_i(x)$ defines the way that $i$ updates its states depending on the state of 
automata \emph{effectively} acting on it; automaton $j$ ``effectively'' acts 
on $i$ if and only if there exists a configuration $x$ in which the state of 
$i$ changes with respect to the change of the state of $j$; $j$ is then called 
a \emph{neighbor} of $i$.\smallskip
 
An \emph{elementary cellular automaton (ECA)} is a particular BAN dived into 
the cellular space $\Z$ so that \emph{(i)} the evolution of state $x_i$ of 
automaton $i$ (rather called cell $i$ in this context) over time only depends 
on that of cells $i-1$, $i$ itself, and $i+1$, and \emph{(ii)} all cells share 
the same and unique local function.
As a consequence, it is easy to derive that there exist $\smash{2^{2^3}} = 
256$ distinct ECA, and it is well known that these ECA can be grouped into 
$88$ equivalence classes up to symmetry.\smallskip

In absolute terms, BANs as well as ECA can be studied as infinite models of 
computation, as it is classically done in particular with ECA.
In this paper, we choose to focus on finite ECA, which are ECA whose 
underlying structure can be viewed as a torus of dimension $1$ which leads 
naturally to work on $\Z/n\Z$, the ring of integers modulo $n$ so that the 
neighborhood of cell $0$ is $\{n-1, 0, 1\}$ and that of cell $n-1$ is 
$\{n-2, n-1, 0\}$.\smallskip

Now the mathematical objects at stake in this paper are statically defined, 
let us specify how they evolve over time, which requires defining when the 
cells state update, by executing the local functions. 

\begin{figure}[t!]
	\centerline{
		\scalebox{.95}{
		  \includegraphics[page=1]{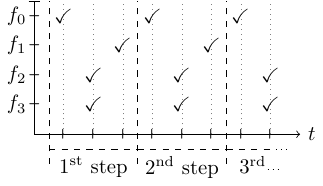}
		}
		\quad
		\scalebox{.95}{
		  \includegraphics[page=2]{figures.pdf}
		}
	}\medskip
	
	\centerline{
		\scalebox{.95}{
		  \includegraphics[page=3]{figures.pdf}		
		}
	}
	\caption{Illustration of the execution over time of local transition 
		functions of any BAN $f$ of size $4$ according to 
		(top left) $\mu_\bs = (\{0\}, \{2,3\}, \{1\})$,
		(top right center) $\mu_\bp = \{(1), (2,0,3)\}$, and
		(bottom) $\mu_\lc = ((1,3,2,2), (0,2,1,0))$.
		The $\checkmark$ symbols indicate the moments at which the automata update 
		their states; the vertical dashed lines separate periodical time steps 
		from each other.}
	\label{fig:um_exec}
\end{figure}

\subsection{Update modes}

To choose an organization of when cells update their state over time leads to 
define what is classically called an update mode (aka update schedule or 
scheme).
In order to increase our knowledge on (a)synchronism sensitivity, as evoked in 
the introduction, we pay attention in this article to deterministic and 
periodic update modes.
Generally speaking, given a BAN $f$  applied over a grid of size $n$, a 
\emph{deterministic} (resp. \emph{periodic}) \emph{update mode} of $f$ is an 
infinite (resp. a finite) sequence $\mu = (B_k)_{k \in \N}$ (resp. $\mu = 
(B_0, \dots, B_{p-1})$), where $B_i$ is a subset of $\integers{n}$ for all $i 
\in \N$ (resp. for all $i \in \integers{p}$). 
Another way of seeing the update mode $\mu$ is to consider it as a function 
$\mu^\star: \N \to \powerset(\integers{n})$ which associates each time step 
with a subset of $\integers{n}$ so that $\mu^\star(t)$ gives the automata 
which update their state at step $t$; furthermore, when $\mu$ is periodic, 
there exists $p \in \N$ such that for all $t \in \N$, $\mu^\star(t+p) = 
\mu^\star(t)$.\smallskip

Three known update mode families are considered: 
the block-sequential~\cite{Robert1986}, 
the block-parallel~\cite{Demongeot2020,Perrot2023} and 
the local clocks~\cite{Rios2021-phd} ones.
Updates induced by each of them over time are depicted in 
Figure~\ref{fig:um_exec}.\smallskip

A \emph{block-sequential update mode} $\mu_\bs = (B_0, \dots, B_{p-1})$ is an 
ordered partition of $\integers{n}$, with $B_i$ a subset of $\integers{n}$ for 
all $i$ in $\integers{p}$. 
Informally, $\mu_\bs$ defines an update mode of period $p$ separating 
$\integers{n}$ into $p$ disjoint blocks so that all automata of a same block 
update their state in parallel while the blocks are iterated in series. 
The other way of considering $\mu_\bs$ is: 
$\forall t \in \N, \mu_\bs^\star(t) = B_{t \mod p}$.\smallskip

A \emph{block-parallel update mode} $\mu_\bp = \{S_0, \dots, S_{s-1}\}$ is a 
partitioned order of $\integers{n}$, with $S_j = (i_{j,k})_{0 \leq k \leq 
|S_j|-1}$ a sequence of $\integers{n}$ for all $j$ in $\integers{s}$. 
Informally, $\mu_\bp$ separates $\integers{n}$ into $s$ disjoint subsequences 
so that all automata of a same subsequence update their state in series while 
the subsequences are iterated in parallel.
Note that there exists a natural way to convert $\mu_\bp$ into a sequence of 
blocks of period $p = \lcm(|S_0|, \dots, |S_{s-1}|)$.
It suffices to define function $\varphi$ as: 
$\varphi(\mu_\bp) = (B_\ell)_{\ell \in \integers{p}}$ with $B_\ell = 
\{i_{j,\ell \mod |S_j|} \mid j \in \integers{s}\}$.  
The other way of considering $\mu_\bp$ is:
$\forall j \in \integers{s}, \forall k \in \integers{|S_j|}$, $i_{j,k} \in 
\mu_\bp^\star(t) \iff k = t \mod |S_j|$.\smallskip

A \emph{local clocks update mode} $\mu_{\lc} = (P,\Delta)$, with 
$P = (p_0, \dots, p_{n-1})$ and $\Delta = (\delta_0, \dots, \delta_{n-1})$, is 
an update mode such that each automaton $i$ of $\integers{n}$ is associated 
with a period $p_i \in \N^*$ and an initial shift $\delta_i \in 
\integers{p_i}$ such that $i \in \mu_{\lc}^\star(t) \iff t = \delta_i \mod 
p_i$, with $t \in \N$.\smallskip

Let us now introduce three particular cases or subfamilies of these three 
latter update mode families. 
The \emph{parallel update mode} $\mu_\p = (\integers{n})$ makes every 
automaton update its state at each time step, such that $\forall t \in \N, 
\mu_\p^\star(t) = \integers{n}$.
A \emph{bipartite update mode} $\mu_\bip = (B_0, B_1)$ is a block-sequential 
update mode composed of two blocks such that the automata in a same block do 
not act on each other (in our ECA framework, this definition induces that such 
update modes are associated to 1D tori of even size and that there are two 
such update modes).
A \emph{sequential update mode} $\mu_\seq = (\phi(\integers{n}))$, where $
\phi(\integers{n}) = \{i_0\}, \dots, \{i_{n-1}\}$ is a permutation of 
$\integers{n}$, makes one and only one automaton update its state at each time 
step so that all automata have updated their state after $n$ time steps 
depending on the order induced by $\phi$.
All these update modes follow the order of inclusion pictured in 
Figure~\ref{fig:um_inclusion}.\smallskip

As we focus on periodic update modes, let us differentiate two kinds of time 
steps.
A \emph{substep} is a time step at which a subset of automata change their 
states.
A \emph{step} is the composition of substeps having occurred over a period.

\begin{figure}[t!]
	\centerline{
		\scalebox{.95}{
		  \includegraphics[page=4]{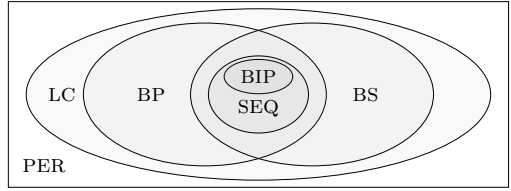}
		}
	}
	\caption{Order of inclusion of the defined families of periodic update 
		modes, where \textsc{per} stands for ``periodic''.}
	\label{fig:um_inclusion}
\end{figure}

\subsection{Dynamical systems}

An ECA $f \in \integers{256}$ together with an update mode $\mu$ define a 
\emph{discrete dynamical system} denoted by the pair $(f, \mu)$. 
$(f, M)$ denotes by extension any dynamical system related to $f$ under the 
considered update mode families, with $M \in \{\textsc{par}, \textsc{bip}, 
\textsc{seq}, \textsc{bp}, \textsc{bs}, \textsc{lc}, 
\textsc{per}\}$.\smallskip

Let $f$ be an ECA of size $n$ and let $\mu$ be a periodic update 
mode represented as a periodical sequence of subsets of $\integers{n}$ such 
that$\mu = (B_0, ..., B_{p-1})$.
Let $F = (f, \mu)$ be the global function from $\B^n$ to itself which defines 
the dynamical system related to ECA $f$ and update mode $\mu$. 
Let $x \in \B^n$ a configuration of $F$.\\
The \emph{trajectory} of $x$ is the infinite path 
$\T(x) \deltaeq x^0 = x \to x^1 = F(x) \to \dots \to x^t = F^t(x) \to \cdots$, 
where $t \in \N$ and 
\begin{multline*}
	F(x) = f_{B_{p-1}} \circ \dots \circ f_{B_0}\text{,}\\
	\text{where } \forall k \in \integers{p}, \forall i \in \integers{n}, 
	f_{B_k}(x)_i = \begin{cases}
			f_i(x) & \text{if } i \in B_k\text{,}\\
			x_i & \text{otherwise,}
	\end{cases}\\
	\text{and } F^t(x) = \underbrace{F \circ \dots \circ F}_{t \text{ times}}(x)
	\text{.}
\end{multline*}

\begin{figure}[t!]
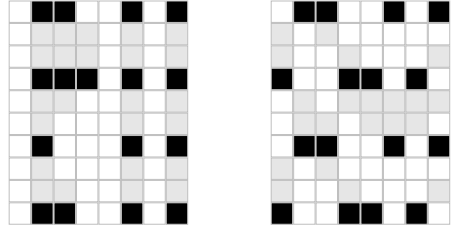

	\centerline{
		\begin{minipage}{.25\textwidth}
			\centerline{
  			\scalebox{.95}{
	  		  \includegraphics[page=5]{figures.pdf}
        }
      }
		\end{minipage}
		\quad
		\begin{minipage}{.25\textwidth}
			\centerline{
			  \scalebox{.95}{
  			  \includegraphics[page=6]{figures.pdf}
  			}
  		}
		\end{minipage}
	}
	\caption{Space-time diagrams (time going downward) representing the $3$ 
  	first (periodical) steps of the evolution of configuration 
		$x = (0,1,1,0,0,1,0,1)$ of dynamical systems 
		(left) $(156, \mu_\bs)$, and
		(right) $(178, \mu_\bp)$, 
    where $\mu_\bs = (\{1,3,4\}, \{0,2,6\}, \{5,7\})$, and
		$\mu_\bp = \{(1,3,4), (0,2,6), (5), (7)\}$, 
		The configurations obtained at each step are depicted by lines with cells 
		at state $1$ in black. 
		Lines with cells at state $1$ in light gray represent the configurations 
		obtained at substeps.
		Remark that $x$ belongs to a limit cycle of length $3$ (resp. $2$) in 
		$(156, \mu_\bs)$ (resp. $(178, \mu_\bp)$).}
	\label{fig:space-time-diag}
\end{figure}

In the context of ECA, it is convenient to represent trajectories by 
\emph{space-time diagrams} which give a visual aspect of the latter, as 
illustrated in Figure~\ref{fig:space-time-diag}. 
The \emph{orbit} of $x$ is the set $\O(x)$ composed of all the configurations 
which belongs to $\T(x)$.
Since $f$ is defined over a grid of finite size and the boundary condition is 
periodic, the temporal evolution of $x$ governed by the successive 
applications of $F$ leads it undoubtedly to enter into a \emph{limit 
phase}, \ie a cyclic subpath $\C(x)$ of $\T(x)$ such that $\forall y = 
F^k(x) \in \C(x), \exists t \in \N, F^t(y) = y$, with $k \in \N$.
$\T(x)$ is this separated into two phases, the limit phase and the 
\emph{transient phase} which corresponds to the finite subpath 
$x \to \dots \to x^\ell$ of length $\ell$ such that $\forall i \in 
\integers{\ell + 1}, \nexists t \in \N, x^{i + t} = x^{i}$. 
The \emph{limit set} of $x$ is the set of configurations belonging to 
$\C(x)$.\smallskip

From these definitions, we derive that $F$ can be represented as a graph 
$\G_F = (\B^n, T)$, where $(x,y) \in T \subseteq \B^n \times \B^n \iff 
y = F(x)$.
In this graph, which is classically called a \emph{transition graph}, 
the non-cyclic (resp. cyclic) paths represent the transient (resp. limit) 
phases of $F$. 
More precisely, the cycles of $\G_F$ are the \emph{limit cycles} of $F$. 
When a limit cycle is of length $1$, we call it a \emph{fixed point}. 
Furthermore, if the fixed point is such that all the cells of the 
configuration has the same state, then we call it an \emph{homogeneous fixed 
point}.\smallskip

Eventually, we make use of the following specific notations.
Let $x \in \B^n$ be a configuration and $[i,j] \subseteq \integers{n}$ be a 
subset of cells. 
We denote by $x_{[i,j]}$ the projection of $x$ on $[i,j]$. 
Since we work on ECA over tori, such a projection defines a sub-configurations 
and can be of three kinds: either $i < j$ and $x_{[i,j]} = (x_i, x_{i+1}, 
\dots, x_{j-1}, x_{j})$, or $i = j$ and $x_{[i,j]} = (x_i)$, or $i > j$ and 
$x_{[i,j]} = (x_i, x_{i+1}, \dots, x_n, x_0, \dots, x_{j-1}, x_j$).
Thus, given $x \in \B^n$ and $i \in \integers{n}$, an ECA $f$ can be rewritten 
as
\begin{equation*}
  f(x) = (f(x_{[n-1,1]}), f(x_{[0,2]}), \dots, f(x_{[i,i+2]}), \dots, 
  f(x_{[n-3,n-1]}), f(x_{[n-2,0]})\text{.}
\end{equation*}

\noindent Abusing notations, the word $u \in \mathbb{B}^k$ is called a 
\emph{wall} for a dynamical system if for all $a, b \in \B$, $f(aub) = u$, and 
we assume in this work that walls are of size $2$, i.e. $k = 2$, unless 
otherwise stated. 
Such a word $u$ is an \emph{absolute wall} (resp. a \emph{relative wall}) for 
an ECA rule if it is a wall for any update mode (resp. strict subset of update 
modes).
We say that a rule $F$ can \emph{dynamically create new walls} if there is a 
time $t\in\N$ and an initial configuration $x^0\in\B^n$ such that 
$x^t(=F^t(x^0))$ has a higher number of walls than $x^0$. 
Finally, we say that a configuration $x$ is an \emph{isle} of $1$s (resp. an 
isle of $0$s) if there exists an interval $I = [a,b] \subseteq \integers{n}$ 
such that $x_i = 1$ (resp. $x_i = 0$) for all $i \in I$ and $x_i = 0$ (resp. 
$x_i = 1$) otherwise.

\section{Results}
\label{sec:res}

\renewcommand{\arraystretch}{1.4}
\begin{table}[!t]
	\centerline{
		\begin{tabular}{|C{15mm}||C{12mm}|C{19mm}|C{20mm}|C{20mm}|C{20mm}|}
			\hline
			\backslashbox{ECA}{M} & \p & \bip & \bs & \bp & \lc\\
			\hline\hline
			156 &
        $\Theta(1)$ & 
        $\Theta(2^{\sqrt{n\log(n)}})$ & 
				$\Omega(2^{\sqrt{n\log(n)}})$ & 
				$\Omega(2^{\sqrt{n\log(n)}})$ & 
				$\Omega(2^{\sqrt{n\log(n)}})$\\\hline
      178 & 
				$\Theta(1)$ & 
				$\Theta(n)$ & 
				$O(n)$ & $\Omega(2^{\sqrt{n\log(n)}})$ & 
				$\Omega(2^{\sqrt{n\log(n)}})$\\\hline
		\end{tabular}
	}\bigskip
	\caption{Asymptotic complexity in terms of the length of the largest limit 
	  cycles of each of the two studied ECA rules, depending on the update 
	  modes.
  }
	\label{table:rules_and_updatemodes}
\end{table}

In this section, we will present the main results of our investigation related 
the asymptotic complexity of ECA rules 156 and 178, in terms of the lengths of 
their largest limit cycles depending on update modes considered. 
These results are summarized in Table~\ref{table:rules_and_updatemodes}
which highlights (a)synchronism sensitivity of ECA. 
As a reminder, these two rules have been chosen because they illustrate 
perfectly the very impact of the choice of update modes on their 
dynamics.\smallskip

By presenting the results rule by rule, we clearly compare the complexity 
changes brought up by the different update modes.
Starting with ECA rule $156$, we prove that it has limit cycles of length at 
most $2$ in parallel, but that in any other update mode can lead to reach 
limit cycles of superpolynomial length.
Then, we show that for ECA rule $178$, complexity increases less abruptly but 
still can reach very long cycles with carefully chosen update modes which fall 
into the category of update modes instantiating local function repetitions 
over a period.

\subsection{ECA rule $156$}

ECA rule $156$ is defined locally by a transition table which associates any 
local neighborhood configuration $(x_{i-1}^t, x_i^t, x_{i+1}^t)$ at step $t 
\in \N$ with a new state $x_i^{t+1}$, where $i \in \Z/n\Z$, as 
follows:\medskip
 
\renewcommand{\arraystretch}{1}
\centerline{$\begin{array}{|c||c|c|c|c|c|c|c|c|}
	\hline
	(x_{i-1}^t, x_i^t, x_{i+1}^t) & 000 & 001 & 010 & 011 & 100 & 101 & 110 & 
	  111\\
	\hline
	x_i^{t+1} & 0 & 0 & 1 & 1 & 1 & 0 & 0 & 1\\
	\hline
	\end{array}$}\medskip
	
Space-time diagrams depending on different update modes of a specific 
configuration under ECA rule $156$ are given in Figure~\ref{fig:rule156}. 
Each of them depicts a trajectory which gives insights about the role of walls 
together with the update modes in order to reach long limit cycles.

\begin{figure}[t!]
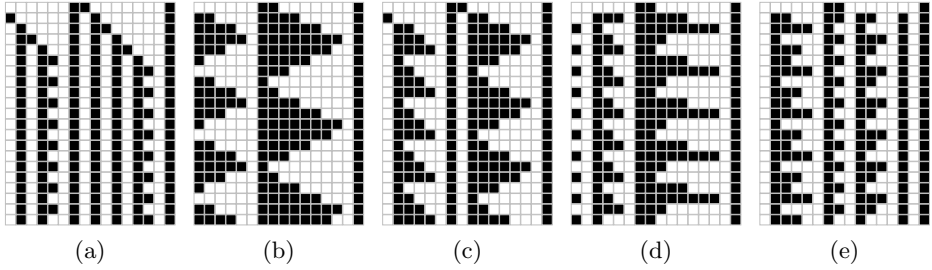

	\centerline{
	  \begin{minipage}{.17\textwidth}
    	\centerline{
      	\scalebox{.45}{
          \includegraphics[page=7]{figures.pdf}	
        }
      }
  	\end{minipage}
	  \quad
	  \begin{minipage}{.17\textwidth}
    	\centerline{
    	  \scalebox{.45}{
          \includegraphics[page=8]{figures.pdf}	
        }
      }
  	\end{minipage}
  	\quad
  	\begin{minipage}{.17\textwidth}
    	\centerline{
  	    \scalebox{.45}{
    	    \includegraphics[page=9]{figures.pdf}
        }
      }
  	\end{minipage}
	  \quad
	  \begin{minipage}{.17\textwidth}
    	\centerline{
	      \scalebox{.45}{
          \includegraphics[page=10]{figures.pdf}	
        }
      }
  	\end{minipage}
	  \quad
	  \begin{minipage}{.17\textwidth}
    	\centerline{
    	  \scalebox{.45}{
          \includegraphics[page=11]{figures.pdf}	
        }
      }
  	\end{minipage}
	}\smallskip
	
	\centerline{
	  \begin{minipage}{.17\textwidth}
    	\centerline{(a)}
  	\end{minipage}
	  \quad
  	\begin{minipage}{.17\textwidth}
    	\centerline{(b)}
  	\end{minipage}
	  \quad
  	\begin{minipage}{.17\textwidth}
	    \centerline{(c)}
  	\end{minipage}
  	\quad
	  \begin{minipage}{.17\textwidth}
    	\centerline{(d)}
  	\end{minipage}
	  \quad
  	\begin{minipage}{.17\textwidth}
	    \centerline{(e)}
  	\end{minipage}
	}
	\caption{Space-time diagrams (time going downward) of configuration 
		$0000001100000001$ following rule $156$ depending on: 
		(a) the parallel update mode $\mu_\p = (\integers{16})$, 
		(b) the bipartite update mode $\mu_\bip = (\{i \in \integers{16} \mid 
		  i \equiv 0 \mod 2\}, \{i \in \integers{16} \mid i \equiv 1 \mod 2\})$, 
		(c) the block-sequential update mode $\mu_\bs = 
			(\{10,15\},\{0,1,5,7,8,12\}, \{4,6,9,11,14\},\{3,13\},\{2\})$, 
		(d) the block-parallel update mode $\mu_\bp = 
		  \{(0,1),(2,3,4),(5),(6,8,7),(11,10,9),(14,13,12),(15)\}$, 
		(e) the local clocks update mode $\mu_\lc = 
  		(P = (2,2,2,2,4,4,4,4,3,3,3,3,1,4,1,1), 
	  	\Delta = (1,1,1,0,3,3,3,2,1,1,1,0,0,3,0,0))$.}
	\label{fig:rule156}
\end{figure}

\begin{lemma}
	\label{lem:156wall01}
	ECA rule $156$ admits only one wall, namely the word $w = 01$.
\end{lemma}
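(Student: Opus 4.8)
The plan is to reduce the wall condition to a pair of one-sided stability constraints on the transition table of rule $156$ and then verify them exhaustively over the four candidate words of length $2$.

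First, I would unfold the definition. A word $u = u_0 u_1 \in \B^2$ is a wall precisely when, for every $a, b \in \B$, applying the local rule to the length-$4$ word $a u_0 u_1 b$ reproduces $u$ in its two inner positions. The only two cells of $a u_0 u_1 b$ that possess a full radius-$1$ neighborhood are those in positions $1$ and $2$, whose neighborhoods are $(a, u_0, u_1)$ and $(u_0, u_1, b)$ respectively. Hence the condition $f(aub) = u$ unpacks into the two requirements $f(a, u_0, u_1) = u_0$ and $f(u_0, u_1, b) = u_1$, holding for all $a, b \in \B$. Intuitively, the left letter $u_0$ must be insensitive to whatever sits on its left, and the right letter $u_1$ insensitive to whatever sits on its right, each reproducing itself.

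Second, I would test the four candidates $00, 01, 10, 11$ against the transition table stated above. For $u = 01$ the required entries are $f(001) = f(101) = 0$ (left constraint) and $f(010) = f(011) = 1$ (right constraint); all four hold, so $01$ is a wall. For each of the remaining three words a single violated entry suffices to eliminate it: $f(100) = 1 \neq 0$ refutes the left constraint for $00$; $f(110) = 0 \neq 1$ refutes the left constraint $f(a,1,0) = 1$ for $10$ (take $a = 1$) and simultaneously the right constraint $f(1,1,b) = 1$ for $11$ (take $b = 0$). Thus $01$ is the unique wall.

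The argument involves no genuine obstacle beyond correctly translating the notational abuse $f(aub) = u$ into the two one-sided stability conditions; once that reduction is fixed, the statement follows from inspecting at most eight entries of the rule table. I would therefore present the reduction explicitly and then tabulate the relevant evaluations, emphasizing the asymmetry of rule $156$ (namely $f(010) = 1$ but $f(110) = 0$, and $f(001) = 0$ but $f(100) = 1$) that makes $01$ stable while $00$, $10$, and $11$ all fail.
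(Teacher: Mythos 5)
Your proof is correct and takes essentially the same approach as the paper: a direct check of the wall condition against the transition table of rule $156$, verifying that $01$ satisfies both one-sided stability constraints and that each of $00$, $10$, $11$ violates at least one entry. You are merely more explicit than the paper, which leaves the elimination of the other three words as an implicit inspection of the rule table; your exhibited counterexamples ($f(100)=1$ and $f(110)=0$) fill in exactly that step.
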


\begin{proof}
	For all $a, b \in \B$, $f_{156}(a01) = 0$ and $f_{156}(01b) = 1$. 
	By definition of the rule, no other word of length $2$ gets this property. 
	Thus $w$ is the unique wall for ECA rule $156$.\hfill\qed
\end{proof}

\begin{lemma}
	\label{lem:156wallcreation}
	ECA rule $156$ can dynamically create new walls only if the underlying 
	update mode makes two consecutive cells update their state simultaneously. 
\end{lemma}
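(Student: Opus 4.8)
The plan is to prove the contrapositive: if the update mode never schedules two consecutive cells in the same substep, then the number of walls can never increase along any trajectory. By Lemma~\ref{lem:156wall01} the only wall is the word $01$, so a wall is exactly a position $i \in \integers{n}$ (indices taken modulo $n$) with $x_i = 0$ and $x_{i+1} = 1$, and ``creating a new wall'' amounts to strictly increasing the number of such positions. Throughout I read ``two consecutive cells update simultaneously'' as ``some block $B_k$ contains two cyclically consecutive indices'', including the wrap-around pair $(n-1,0)$.

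The key observation I would extract from the transition table is that a cell of rule $156$ keeps its state whenever its left neighbor is $0$ or its right neighbor is $1$: formally, $f(0,b,c) = b$ and $f(a,b,1) = b$ for all $a,b,c \in \B$ (the rows with left bit $0$, respectively right bit $1$, all return the middle bit). These are precisely the two guard conditions that flank the wall $01$, and they are what make wall creation require both flanking cells to be free to change at once.

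Next I would fix an arbitrary substep updating a block $B_k$ that contains no two consecutive cells, and argue position by position that no new wall can appear. Write $x$ for the configuration before the substep and $x'$ for the one after, and consider a position $i$ that is a wall in $x'$, i.e.\ $x'_i = 0$ and $x'_{i+1} = 1$. Since $B_k$ has no two consecutive cells, at most one of $i, i+1$ lies in $B_k$. If neither does, the pair $(x_i, x_{i+1})$ is unchanged and the wall already existed. If $i \in B_k$ but $i+1 \notin B_k$, then $x'_{i+1} = x_{i+1} = 1$, so cell $i$ has right neighbor $1$ and hence $x'_i = f(x_{i-1}, x_i, 1) = x_i$; from $x'_i = 0$ we deduce $x_i = 0$, so $01$ was already present at $(i,i+1)$. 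The symmetric case $i+1 \in B_k$, $i \notin B_k$ uses $x'_i = x_i = 0$ together with $x'_{i+1} = f(0, x_{i+1}, x_{i+2}) = x_{i+1}$ and yields the same conclusion. In every admissible case the wall at $i$ was already present in $x$, so the set of wall positions does not grow across the substep.

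Finally, since a full step $F$ is a composition of such substeps and each one is wall-non-increasing under the standing assumption, iterating $F$ cannot increase the wall count; thus $F^t(x^0)$ never has more walls than $x^0$, for every $t \in \N$ and every $x^0 \in \B^n$, which is exactly the contrapositive of the statement. I expect no substantive obstacle beyond making this local case analysis airtight: the only genuine care needed is the torus bookkeeping (treating $(n-1,0)$ as a consecutive pair) and checking that the two guard identities $f(0,b,c)=b$ and $f(a,b,1)=b$ cover precisely the two ways a single-cell update could have produced a $01$, so that a fresh $01$ truly forces both of its cells to have been updated together.
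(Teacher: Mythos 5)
Your proof is correct, and it takes a genuinely different route from the paper's. The paper argues forward and globally: it first reduces to a subconfiguration between two walls, shows such a segment must have the form $(1)^\ell(0)^r$, observes that the only active cells are the two at the $1$-$0$ interface, and then does a case disjunction on whether those two interface cells update before, after, or simultaneously with each other, only the simultaneous case producing a wall. You instead prove the contrapositive by a purely local invariance argument: the two guard identities $f(0,b,c)=b$ and $f(a,b,1)=b$ (both correct for rule $156$) imply that, across any substep whose block contains no two cyclically consecutive cells, every occurrence of $01$ in the output configuration was already present at the same position in the input; composing substeps makes the set of wall positions non-increasing, so the wall count can never grow, which is exactly the contrapositive. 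What your approach buys: it needs no structural decomposition into inter-wall segments, it treats wall-free configurations and the torus wrap-around uniformly, it yields a slightly stronger invariant (the wall \emph{set}, not just its cardinality, is monotone under each substep), and it avoids the paper's somewhat informal ``updated strictly before/strictly after'' bookkeeping, which elides what happens at intermediate substeps. What the paper's approach buys: the segment structure $(1)^\ell(0)^r$ and the interface dynamics established in its proof are reused as the engine of Theorems~\ref{thm:156parallel} and~\ref{thm:156BIP}, so that proof doubles as groundwork for the rest of the section, whereas your argument, while cleaner for this lemma in isolation, leaves that structural analysis still to be done later.
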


\begin{proof}
	Remark that to create walls in the trajectory of a configuration $x 
	\in \B^n$, $x$ need to have at least one wall. 
	Indeed, the only configurations with no walls are $0^n$ and $1^n$, and they 
	are fixed points.
	Thus, with no loss of generality, let us focus on a subconfiguration $y \in 
	\B^k$, with $k \leq n-2$, composed of free state cells surrounded by two 
	walls.
	Notice that since the configurations are toric, the wall ``at the left'' and 
	``at the right'' of $y$ can be represented by the same two cells.\smallskip
	
	Configuration $y$ is necessarily of the form $y = (1)^\ell(0)^r$, with $k = 
	\ell + r$. 
	Furthermore, since $f_{156}(000) = 0$ and $f_{156}(100) = 1$, the only cells 
	whose states can change are those where $1$ meets $0$, i.e. $y_{\ell-1}$ and 
	$y_\ell$.
	Such state changes depends on the schedule of updates between these two 
	cells.
	Let us proceed with case disjunction:
	\begin{enumerate}
	\item Case of $\ell \geq 1$ and $r \geq 1$:
		\begin{itemize}
		\item if $y_\ell$ is updated strictly before $y_{\ell-1}$, then $y^1 = 
			(1)^{\ell+1}(0)^{r-1}$, and the number of $1$s (resp. $0$s) increases 
			(resp. decreases);
		\item if $y_\ell$ is updated strictly after $y_{\ell-1}$, then $y^1 = 
			(1)^{\ell-1}(0)^{r+1}$, and the number of $1$s (resp. $0$s) decreases 
			(resp. increases);
		\item if $y_\ell$ and $y_{\ell-1}$ are updated simultaneously, then $y^1 = 
			(1)^{\ell-1}(01)(0)^{r-1}$, and a wall is created.
		\end{itemize}
	\item Case of $\ell = 0$ (or of $y = 0^k$): nothing happens until $y_0$ is 
	  updated. 
		Let us admit that this first state change has been done for the sake of 
		clarity and focus on $y^1 = (1)^{\ell = 1}(0)^{r = k-1}$, which falls into 
		Case 1.
	\item Case of $r = 0$ (or of $y = 1^k$): symmetrically to Case 2, nothing 
  	happens until $y_{k-1}$ is updated. 
		Let us admit that this first state change has been done for the sake of 
		clarity and focus on $y^1 = (1)^{\ell = k-1}(0)^{1}$, which falls into 
		Case 1.
	\end{enumerate}
	As a consequence, updating two consecutive cells simultaneously is a 
	necessary condition for creating new walls in the dynamics of ECA rule 
	$156$.\qed 
\end{proof}

\begin{thm}
	\label{thm:156parallel}
	$(156, \p{})$ has only fixed points and limit cycles of length two.
\end{thm}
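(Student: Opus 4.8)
The plan is to leverage the two wall lemmas to constrain the asymptotics. First I would set aside the only two wall-free configurations, $0^n$ and $1^n$, which are immediately checked to be fixed points. Every other configuration, being a cyclic word over $\B$ containing both symbols, has at least one occurrence of the wall $w=01$; hence I can read it cyclically as a concatenation of \emph{units} $1^a0^b$ with $a,b\geq 1$, where consecutive units are glued along walls (the leading $1$ of a unit is the $1$-cell of a wall and its trailing $0$ is the $0$-cell of the next wall).

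The crux is to control the number of walls over time. By Lemma~\ref{lem:156wall01} the word $01$ is the unique wall and, once present, it is stable; thus under $\p$ the number of walls is nondecreasing along any trajectory and, being bounded above by $\lfloor n/2\rfloor$, it is eventually constant, say from some time $T$. In parallel the local rule amounts to flipping a cell exactly when its left neighbour is $1$ and its right neighbour is $0$, so I would compute the image of a single unit $1^a0^b$, treating the adjacent (fixed) wall cells as boundary conditions. This yields four cases: $1^a0^b \mapsto 1^{a-1}0\,1\,0^{b-1}$ when $a,b\geq2$, where the central factor $01$ is a newly created wall (consistent with Lemma~\ref{lem:156wallcreation}) splitting the unit into $1^{a-1}0$ and $1\,0^{b-1}$; $1^a0 \mapsto 1^{a-1}0^2$ when $a\geq2,b=1$; $1\,0^b \mapsto 1^20^{b-1}$ when $a=1,b\geq2$; and $10\mapsto10$. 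Since distinct units are separated by fixed walls, they evolve independently.

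From time $T$ on, no wall is ever created, so no unit in the forward orbit of any unit present at time $T$ can have both $a\geq2$ and $b\geq2$. A finite check of the four cases above shows that the only units whose entire forward orbit avoids such a state are $10$ (a fixed point) and the pair $110,100$ forming the $2$-cycle $110\leftrightarrow100$; every other unit reaches a state with $a,b\geq2$ and therefore would create a wall. Consequently each unit present at time $T$ lies in $\{10,110,100\}$, has period dividing $2$, and—because the units are decoupled by the fixed walls—the whole configuration is periodic of period $\lcm\{1,2\}\in\{1,2\}$. Hence every limit cycle has length at most two, and the oscillation $110\leftrightarrow100$ shows that length two is realised.

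The step I expect to be most delicate is the boundary bookkeeping in the unit computation: the wall cells are shared between adjacent units, so I must verify that each image—especially the wall-creating split $1^a0^b\mapsto1^{a-1}0\,1\,0^{b-1}$—is obtained with the correct fixed boundary values and that the lengths and the positions of old and new walls are tracked exactly. Once this local computation is pinned down, the remaining ingredients (monotonicity and boundedness of the wall count, and the finite enumeration of the permanently wall-free units) are routine.
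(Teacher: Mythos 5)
Your proof is correct. I verified the flip characterization (under rule $156$ a cell flips exactly when its left neighbour is $1$ and its right neighbour is $0$, i.e.\ neighbourhoods $100$ and $110$), the four unit images together with the boundary conventions (the leading $1$ sees a fixed $0$ on its left, the trailing $0$ a fixed $1$ on its right), and the finite classification: every unit outside $\{10,\,110,\,100\}$ is either already in a state with $a,b\geq 2$ or reaches one in a single step, so it would create a wall, which is forbidden once the wall count has stabilized. Your foundations are the same as the paper's, namely Lemma~\ref{lem:156wall01} and the independence of the regions delimited by walls, but your finishing argument is genuinely different. The paper computes trajectories explicitly inside each inter-wall region: starting from $(01)(0)^k(01)$ (or symmetrically $(01)(1)^k(01)$, the mixed case reducing to these after one step), a new wall appears every two steps until the region is exhausted, leaving a fixed point when $k$ is even and a single oscillating cell, hence a $2$-cycle, when $k$ is odd. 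You instead use a monotone-invariant argument: the wall count is nondecreasing and bounded by $\lfloor n/2\rfloor$, hence eventually constant, after which only units whose forward orbits never create walls can persist, and these are exactly $10$ and the pair $110\leftrightarrow 100$. Your route avoids the paper's explicit transient computations and their even/odd bookkeeping, and it is structurally cleaner; what it gives up is the quantitative information the paper's computation yields for free, namely the explicit description of the limit configurations (walls packed side by side, possibly with one free cell) and a bound on the transient length of a region of width $k$. In the end both proofs exhibit the same asymptotic structure, since concatenations of your three surviving units are precisely the paper's limit configurations.
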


\begin{proof}
	We base the proof on Lemmas~\ref{lem:156wall01} 
	and~\ref{lem:156wallcreation}.
	So, let us analyze the possible behaviors between two walls, since by 
	definition, what happens between two walls is independent of what happens 
	between two other walls.\smallskip
	
	Let us prove the results by considering the three possible distinct cases 
	for a configuration between two walls $y \in \B^{k+4}$:
	\begin{itemize}
	\item Consider $y = (01)(0)^k(01)$ the configuration with only $0$s between 
	  the two walls. 
		Applying the rule twice, we obtain $y^2 = (01)^2(0)^{k-2}(01)$.
		So, a new wall appears every two iterations so that, for all $t < 
		\frac{k}{2}$, $y^{2t} = (01)^{t+1}(0)^{k-2t}(01)$, until a step is reached 
		where there is no room for more walls. 
		This step is reached after $k$ (resp $k-1$) iterations when $k$ is even 
		(resp. odd) and is such that there is only walls if $k$ is even (which 
		implies that the dynamics has converged to a fixed point), and only walls 
		except one cell otherwise.
		In this case, considering that $t = \lfloor \frac{k}{2} \rfloor$, because 
		$f_{156}(100) = 1$ and $f_{156}(110) = 0$, we have that 
		$y^{2t} = (01)^{t+1}(0)(01) \to y^{2t+1} = (01)^{t+1}(1)(01) \to 
		y^{2t} = (01)^{t+1}(0)(01)$, which leads to a limit cycle of length 
		$2$.
	\item Consider now that $y = (01)(1)^k(01)$. 
		Symmetrically, for all $t < \frac{k}{2}$, $y^{2t} = 
		(01)(0)^{k-2t}(01)^{t+1}$. 
		The same reasoning applies to conclude that the length of the largest 
		limit cycle is $2$. 
	\item Consider finally configuration $y = (01)(1)^\ell(0)^r(01)$ for which 
	  $k = \ell + r$.
		Applying $f_{156}$ on it leads to $y^1 = 
		(01)(1)^{\ell-1}(01)(0)^{r-1}(01)$, 
		which falls into the two previous cases.\hfill\qed
	\end{itemize}
\end{proof}

\begin{thm}
	\label{thm:156BIP}
	$(156, \bip{})$ of size $n$ has largest limit cycles of length 
	$\Theta(2^{\sqrt{n\log(n)}})$.
\end{thm}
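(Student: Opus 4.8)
The plan is to reduce the dynamics of $(156, \bip{})$ to a collection of independent ``particles'' bouncing between frozen walls, whose individual periods combine through a least common multiple, and to recognize the resulting quantity as Landau's function. First I would observe that the two blocks of the bipartite mode are exactly the even and the odd cells, so that no two consecutive cells are ever updated within the same substep; by Lemma~\ref{lem:156wallcreation} this forbids the creation of any new wall, while Lemma~\ref{lem:156wall01} guarantees that each wall $01$ is frozen (cell by cell). Hence the set of walls is an invariant of the dynamics: if the initial configuration carries $k$ walls, they persist forever at fixed positions and cut the torus into $k$ arcs, each delimited by two consecutive walls and containing, by the alternation of $01$ and $10$ interfaces around the cycle, exactly one $10$ interface. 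Because the wall cells are frozen, each arc evolves under fixed boundary conditions, independently of the others, and the whole dynamics has no transient.

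Next I would analyze a single arc in isolation. Encoding its unique mobile interface by the position $p$ of its rightmost $1$, a direct case analysis of $f_{156}$ under the even-then-odd schedule shows that, away from the walls, $p$ decreases by $2$ when $p$ is even and increases by $2$ when $p$ is odd, while upon reaching a bounding wall the interface reverses its direction of motion. The crux is to prove that this reflecting motion is a single cyclic permutation of the set of admissible positions of $p$: if the two bounding walls are at distance $d$, there are $R = d-1$ admissible positions, the odd ones being swept during the rightward pass and the even ones during the leftward pass, so that the interface returns to its start after exactly $R$ steps, i.e.\ the arc has period exactly $R$. Since a configuration is entirely determined by the tuple of interface positions and the arcs are independent, the orbit of any configuration is periodic of length $\lcm(R_1,\dots,R_k)$, where $R_1,\dots,R_k$ are the arc periods and $\sum_i R_i = n-k$ (each wall accounting for one unit).

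With this reduction the theorem becomes a statement about $\max \lcm(R_1,\dots,R_k)$ over all admissible decompositions. For the upper bound I would invoke Landau's function $g(m) = \max\{\lcm(a_1,\dots,a_j) : \sum_i a_i = m\}$, which is non-decreasing and satisfies $g(m) = 2^{\Theta(\sqrt{m\log m})}$: since $\sum_i R_i = n-k \le n$, every orbit period is at most $g(n) = 2^{O(\sqrt{n\log n})}$. For the matching lower bound I would, given an even $n$, choose the wall positions so that $R_1,\dots,R_m$ are the first $m$ primes, with $m$ maximal subject to $\sum_{i=1}^m (p_i+1) \le n$, filling the remaining cells with frozen length-one arcs (adjusting parities so that the bipartite mode stays well defined). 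The orbit period is then $\lcm(p_1,\dots,p_m) = \prod_i p_i$, and since $\sum_i p_i = \Theta(n)$ forces $p_m = \Theta(\sqrt{n\log n})$, the prime number theorem gives $\prod_i p_i = 2^{\Omega(\sqrt{n\log n})}$. Combining the two bounds yields the announced $\Theta(2^{\sqrt{n\log n}})$.

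I expect the main obstacle to be the single-arc analysis, specifically proving that the reflecting interface realizes a \emph{single} $R$-cycle rather than splitting into shorter cycles. The bookkeeping of how a speed-$2$ move is truncated and reversed at each wall, together with the parity constraints governing which cells move in which substep, must be handled with care; it is precisely here that the \emph{exact} value $R$ (rather than a mere $\Theta(R)$ estimate) is obtained, which is what makes both the $\lcm$ structure and the final Landau asymptotics come out cleanly.
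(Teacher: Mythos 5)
Your proposal is correct and takes essentially the same route as the paper: frozen walls (the paper's Lemmas~\ref{lem:156wall01} and~\ref{lem:156wallcreation}) cut the torus into independent arcs, each arc is a reflecting interface whose exact period is linear in its length, the global period is the least common multiple of the arc periods, and maximizing that lcm over partitions of at most $n$ yields $\Theta(2^{\sqrt{n\log(n)}})$. The only cosmetic difference is that you invoke Landau's function for the extremal lcm while the paper cites the Del{\'e}glise--Nicolas bound on maximal products of distinct primes with bounded sum; the two are asymptotically equivalent here.
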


\begin{proof}
	First, by definition of a bipartite update mode and by 
	Lemmas~\ref{lem:156wall01} and~\ref{lem:156wallcreation}, the only walls 
	appearing in the dynamics are the ones present in the initial configuration. 
	Let us prove that, given two walls $u$ and $v$, distanced by $k$ cells, the 
	largestlimit cycles of the dynamics between $u$ and $v$ are of length $k+1$. 
	We proceed by case disjunction depending on the nature of the 
	subconfiguration $y \in \B^{k+4}$, with the bipartite update mode 
	$\mu = (\{i \in \integers{k+4} \mid i \equiv 0 \mod 2\}, 
	\{i \in \integers{k+4} \mid i \equiv 0 \mod 2\})$:
	\begin{enumerate}
	\item Case of $y = (01)(0)^k(01)$: we prove that configuration 
  	$(01)(1)^k(01)$ is rea\-ched after $\lceil \frac{k}{2} + 1 \rceil$ 
  	steps:\medskip
  	
		\begin{minipage}{.4\textwidth}
			\begin{itemize}
			\item If $k$ is odd, we have:\\[-3mm]
				\begin{equation*}
					\begin{array}{ccc}
						y & = & (01)(0)^k(01)\\
						y^1 & = & (01)(1)(0)^{k-1}(01)\\
						y^2 & = & (01)111(0)^{k-3}(01)\\
						& \vdots & \\
						y^{\lfloor\frac{k}{2}\rfloor} & = & (01)(1)^{k-2}(00)(01)\\
						y^{\lceil\frac{k}{2}\rceil} & = & (01)(1)^{k}(01)\\
					\end{array}
				\end{equation*}
			\end{itemize}
		\end{minipage}
		\quad
		\begin{minipage}{.42\textwidth}
			\begin{itemize}
			\item If $k$ is even, we have:\\[-3mm]
				\begin{equation*}
					\begin{array}{ccc}
						y & = & (01)(0)^k(01)\\
						y^1 & = & (01)(1)(0)^{k-1}(01)\\
						y^2 & = & (01)(111)(0)^{k-2}(01)\\
						& \vdots & \\
			      y^{\frac{k}{2}} & = & (01)(1)^{k-1}(0)(01)\\
						y^{\frac{k}{2}+1} & = & (01)(1)^{k}(01)\\
					\end{array}
				\end{equation*}
			\end{itemize}			
		\end{minipage}\smallskip
		
	\item Case of $y = (01)(1)^k(01)$: we prove that configuration 
  	$(01)(0)^k(01)$ is rea\-ched after $\lceil \frac{k}{2} + 1 \rceil$ 
  	steps:\\
		\begin{minipage}{.4\textwidth}
			\begin{itemize}
			\item If $k$ is odd, we have:\\[-3mm]
				\begin{equation*}
					\begin{array}{ccc}
						y & = & (01)(1)^k(01)\\
						y^1 & = & (01)(1)^{k-1}(0)(01)\\
						y^2 & = & (01)(1)^{k-3}(000)(01)\\
						& \vdots & \\
						y^{\lfloor\frac{k}{2}\rfloor} & = & (01)(11)(0)^{k-2}(01)\\
						y^{\lceil\frac{k}{2}\rceil} & = & (01)(0)^{k}(01)\\
					\end{array}
				\end{equation*}
			\end{itemize}
		\end{minipage}
		\quad
		\begin{minipage}{.42\textwidth}
			\begin{itemize}
			\item If $k$ is even, we have:\\[-3mm]
				\begin{equation*}
					\begin{array}{ccc}
						y & = & (01)(1)^k(01)\\
						y^1 & = & (01)(1)^{k-2}(00)(01)\\
						y^2 & = & (01)(1)^{k-4}(0000)(01)\\
						& \vdots & \\
						y^{\frac{k}{2}} & = & (01)(11)(0)^{k-2}(01)\\
						y^{\frac{k}{2}+1} & = & (01)(0)^{k}(01)\\
					\end{array}
				\end{equation*}
			\end{itemize}
		\end{minipage}\smallskip
		
	\item Case of $y = (01)(1)^\ell(0)^r(01)$: this case is included in Cases 1 
	  and 2.
	 \end{enumerate}\smallskip

	As a consequence, the dynamics of any $y$ leads indeed to a limit cycle of 
	length $k+1$. 
	Remark that if we had chosen the other bipartite update mode as reference, 
	the dynamics of any $y$ would have been symmetric and led to the same limit 
	cycle.\smallskip
	
	Finally, since the dynamics between two pairs of distinct walls of 
	independent of each other, the asymptotic dynamics of a global configuration 
	$x$ such that $x = (01)(0)^{k_1}(01)(0)^{k_2}(01)\dots(01)(0)^{k_m}$ is a 
	limit cycle whose length equals to the least common multiple of the lengths 
	of all limit cycles of the subconfigurations $(01)(0)^{k_1}(01), 
	(01)(0)^{k_2}(01), \dots, (01)(0)^{k_m}(01)$.
	We derive that the largest limit cycles are obtained when the $(k_i+1)$s are 
	distinct primes whose sum is equal to $n - 2m$, with $m$ is constant.
	As a consequence, the length of the largest limit cycle is lower- and upper-
	bounded by the primorial of $n$ (\ie{} the maximal product of distinct 
	primes whose sum is $\leq n$.), denoted by function $h(n)$.
	In~\cite{Deleglise2015}, it is shown  in Theorem~$18$ that when $n$ tends to 
	infinity, $\log h(n) \sim \sqrt{n \log n}$.
	Hence,  we deduce that the length of the largest limit cycles of 
	$(156, \bip{})$ of size $n$ is $\Theta(2^{\sqrt{n\log(n)}})$.\hfill\qed
\end{proof}

\begin{cor}
	\label{cor:156other}
	The families $(156, \bs{})$, $(156, \bp{})$ and $(156, \lc{})$ of size 
	$n$ have largest limit cycles of length $\Omega(2^{\sqrt{n\log(n)}})$.
\end{cor}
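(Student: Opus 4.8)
The plan is to obtain the lower bound essentially for free from Theorem~\ref{thm:156BIP}, by noticing that the witness it constructs already lives inside each of the three larger families. As recorded in Figure~\ref{fig:um_inclusion}, the bipartite family sits below $\bs{}$, $\bp{}$ and $\lc{}$, and I would make this concrete for ECA $156$ as follows. A bipartite mode $\mu_\bip = (B_0, B_1)$ is, by its very definition, a block-sequential mode of period two, so $\mu_\bip$ is an instance of $\bs{}$. It is an instance of $\lc{}$ via the local clocks assignment $p_i = 2$ for all $i$, with $\delta_i = 0$ when $i \in B_0$ and $\delta_i = 1$ when $i \in B_1$. Finally, for even $n$ it is an instance of $\bp{}$ through the partitioned order $\{(0,1),(2,3),\dots,(n-2,n-1)\}$, whose period is $\lcm(2,\dots,2) = 2$ and which updates the even-indexed cells on even substeps and the odd-indexed cells on odd substeps. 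In each of these three readings the induced sequence of substeps, and hence the global function $F$, coincides with that of $\mu_\bip$.

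Granting this, the even case is immediate and requires only the lower-bound half of Theorem~\ref{thm:156BIP}. First I would fix an even $n$ and take the configuration $x$ built in that proof, whose walls split the torus into $m = O(1)$ wall-free segments of lengths $k_1,\dots,k_m$ with the $k_i + 1$ chosen to be distinct primes summing to $n - 2m$, so that $(156, \mu_\bip)$ has a limit cycle through $x$ of length the primorial $h(n) = \Theta(2^{\sqrt{n\log(n)}})$. Since $\mu_\bip$ is simultaneously a block-sequential, a local clocks and (for even $n$) a block-parallel mode, the pair $(156, \mu_\bip)$ is a legitimate instance of each of $(156, \bs{})$, $(156, \bp{})$ and $(156, \lc{})$. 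The largest limit cycle attached to a family is the maximum over its modes and configurations, and this maximum can only grow when the family grows; hence each of the three families inherits a limit cycle of length $h(n)$, giving $\Omega(2^{\sqrt{n\log(n)}})$ for even $n$.

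The only genuine obstacle is odd $n$, for which no bipartite mode exists. Here I would replace $\mu_\bip$ by a \emph{near-bipartite} wall-preserving mode in each family: for $\bs{}$ and $\lc{}$ the two-block mode $(B_0, B_1)$ with $B_0$ the even- and $B_1$ the odd-indexed cells (a valid block-sequential mode, and a local clocks mode with all periods $2$), and for $\bp{}$ the analogous pairing $\{(0,1),\dots,(n-3,n-2),(n-1)\}$ leaving one singleton subsequence. Such a mode fails to be bipartite only because the torus junction forces one or two pairs of adjacent cells to update within the same substep. The key observation I would exploit is that these defects are harmless once a wall is pinned at the junction: walls are invariant under $f_{156}$ regardless of the schedule, so placing the witness configuration so that the defect cells carry the wall $01$ freezes them, whence the only simultaneously updated adjacent cells never change state and, by Lemma~\ref{lem:156wallcreation}, no new wall is created elsewhere (a short direct check confirms no $01$ pattern can form at the frozen junction). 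The pinned wall then cuts the torus into segments on which the mode acts exactly as the bipartite mode of Theorem~\ref{thm:156BIP}, so that analysis applies verbatim and a configuration with $m = O(1)$ walls realizing distinct primes again produces a global period equal to the primorial of $n - O(1)$ cells. Because $\log h(n-c) \sim \sqrt{(n-c)\log(n-c)} \sim \sqrt{n\log(n)}$ for any constant $c$, discarding boundedly many cells leaves the estimate $\Omega(2^{\sqrt{n\log(n)}})$ intact, and the main delicacy is precisely this verification that pinning walls at the junctions neutralizes the non-independence introduced by the odd torus.
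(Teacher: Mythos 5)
Your proposal follows the same core route as the paper, whose entire proof of this corollary is the one-line observation that bipartite modes are particular block-sequential and block-parallel modes, and that both of those families sit inside the local clocks family, so the witness of Theorem~\ref{thm:156BIP} is inherited by all three. Your first two paragraphs are exactly this argument made concrete: the explicit local clocks assignment ($p_i = 2$, shifts $0/1$) and the pairing $\{(0,1),(2,3),\dots,(n-2,n-1)\}$, which indeed induces via $\varphi$ the block sequence (evens, odds), are correct instantiations. Where you genuinely go beyond the paper is the odd-$n$ case: since bipartite modes exist only on tori of even size, the paper's inheritance argument literally produces witnesses only for even $n$, and under the standard reading of $\Omega(\cdot)$ (for all sufficiently large $n$) the corollary needs something for odd sizes, which the paper silently omits. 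Your repair is sound: by Lemma~\ref{lem:156wall01}, $01$ is an \emph{absolute} wall for rule $156$ ($f_{156}(a01)=0$ and $f_{156}(01b)=1$ for all $a,b$, with no dependence on the schedule), so pinning it on the one or two adjacent pairs that the odd torus forces into a common block freezes those cells, rules out wall creation at the junction (consistently with Lemma~\ref{lem:156wallcreation}, the only simultaneously updated adjacent cells never change state), and cuts the torus into a path on which your near-bipartite mode acts exactly as the bipartite one of Theorem~\ref{thm:156BIP}; sacrificing $O(1)$ cells leaves the primorial asymptotics, hence the bound $\Omega(2^{\sqrt{n\log(n)}})$, unchanged. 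In short: same approach as the paper for the main inclusion step, plus a legitimate patch of a gap the paper glosses over.
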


\begin{proof}
	Since the bipartite update modes are specific block-sequential and 
	block-parallel update modes, and since both block-sequential and 
	block-parallel update modes are parts of local-clocks update modes, all of 
	them inherit the property stating that the lengths of the largest limit 
	cycles are lower-bounded by $2^{\sqrt{n\log(n)}}$.\hfill\qed
\end{proof}

\subsection{ECA rule $178$}

ECA rule $178$ is defined locally by the following transition table:\medskip

\centerline{$\begin{array}{|c||c|c|c|c|c|c|c|c|}
	\hline
	(x_{i-1}^t, x_i^t, x_{i+1}^t) & 000 & 001 & 010 & 011 & 100 & 101 & 110 & 
	  111\\
	\hline
	x_i^{t+1} & 0 & 1 & 0 & 0 & 1 & 1 & 0 & 1\\
	\hline
	\end{array}$}\medskip

Space-time diagrams depending on different update modes of a specific 
configuration under ECA rule $178$ are given in Figure~\ref{fig:rule178}. 
Each of them depict trajectories giving ideas of how to reach a limit cycle of 
high complexity.

\begin{figure}[t!]
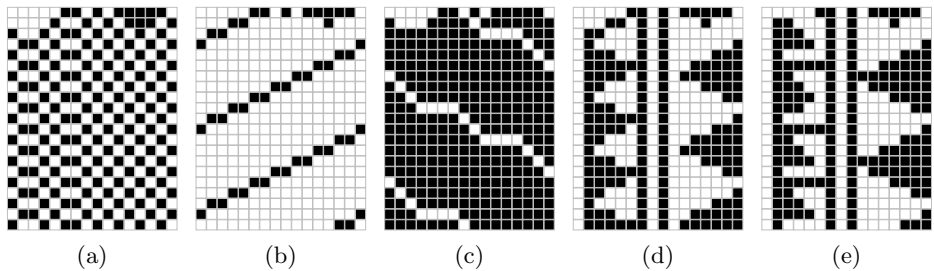

	\centerline{
	  \begin{minipage}{.17\textwidth}
  	  \centerline{
	      \scalebox{.45}{
    	    \includegraphics[page=12]{figures.pdf}
    	  }
    	}
  	\end{minipage}
  	\quad
  	\begin{minipage}{.17\textwidth}
	    \centerline{
      	\scalebox{.45}{
      	  \includegraphics[page=13]{figures.pdf}
      	}
     }
  	\end{minipage}
	  \quad
	  \begin{minipage}{.17\textwidth}
    	\centerline{
      	\scalebox{.45}{
        	\includegraphics[page=14]{figures.pdf}
        }
      }
  	\end{minipage}
	  \quad
	  \begin{minipage}{.17\textwidth}
    	\centerline{
	      \scalebox{.45}{
	        \includegraphics[page=15]{figures.pdf}
        }
      }
  	\end{minipage}
	  \quad
	  \begin{minipage}{.17\textwidth}
    	\centerline{
      	\scalebox{.45}{
        	\includegraphics[page=16]{figures.pdf}
        }
      }
  	\end{minipage}
	}\smallskip
	
	\centerline{
	  \begin{minipage}{.17\textwidth}
    	\centerline{(a)}
  	\end{minipage}
	  \quad
  	\begin{minipage}{.17\textwidth}
	    \centerline{(b)}
  	\end{minipage}
	  \quad
  	\begin{minipage}{.17\textwidth}
	    \centerline{(c)}
  	\end{minipage}
	  \quad
    \begin{minipage}{.17\textwidth}
    	\centerline{(d)}
	  \end{minipage}
  	\quad
  	\begin{minipage}{.17\textwidth}
	    \centerline{(e)}
  	\end{minipage}
	}
	\caption{Space-time diagrams (time going downward) of configuration 
		$0000011010111110$ following rule $178$ depending on: 
		(a) the parallel update mode $\mu_\p = (\integers{16})$, 
		(b) the bipartite update mode $\mu_\bip = (\{i \in \integers{16} \mid i 
		  \equiv 0 \mod 2\}, \{i \in \integers{16} \mid i \equiv 1 \mod 2\})$, 
		(c) the block-sequential update mode $\mu_\bs = (\{3,9,15\}, 
		  \{2,4,8,10,14\}, \{11,5,7,11,13\},\{0,6,12\})$, 
		(d) the block-parallel update mode $\mu_\bp =\{(0), (1), (2,3), (4,5), 
		  (6), (7), (8), (9), (10, 11), (12, 13), (14, 15)\}$, 
		(e) the local clocks update mode $\mu_\lc = 
			(P = (1,1,2,2,2,2,1,1,1,4,4,4,4,4,4,4), 
			\Delta = (0,0,1,0,1,0,0,0,0,0,1,0,1,0,1,0))$.}
	\label{fig:rule178}
\end{figure}

\begin{lemma}
	\label{lem:178walls}
	ECA rule $178$ admits two walls, $01$ and $10$, which are relative walls.
\end{lemma}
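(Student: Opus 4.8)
The plan is to mimic the proof of Lemma~\ref{lem:156wall01} by reading the behaviour of $f_{178}$ directly off its transition table, but to keep in mind that a \emph{relative} wall need not be pointwise fixed: it only has to reconstitute itself over a full period while preventing information from crossing. First I would compute, for each candidate word $u=u_0u_1$, the update of its two cells with the outer neighbours $a$ (left of $u_0$) and $b$ (right of $u_1$) treated as free. For the two \emph{unequal} words this yields, for all $a,b\in\B$,
\[
 f_{178}(a01)=1,\quad f_{178}(01b)=0,\qquad f_{178}(a10)=0,\quad f_{178}(10b)=1,
\]
so each boundary cell flips \emph{independently of the outer neighbour}; in particular the parallel step sends $01\mapsto 10$ and $10\mapsto 01$. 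For the two \emph{equal} words one gets instead $f_{178}(a00)=a$, $f_{178}(00b)=b$, and symmetrically $f_{178}(a11)=a$, $f_{178}(11b)=b$, i.e.\ the boundary cell merely copies the outer neighbour.

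From these two lines the statement follows. Because $f_{178}(a01)=1\ne 0$, the word $01$ is not preserved by the parallel step (it is sent to $10$), and symmetrically for $10$; hence neither word is an absolute wall, which is exactly what forces the qualifier \emph{relative}. To show they are nonetheless walls for a nonempty, strict subfamily of update modes I would exploit the neighbour-independence above: since the orbit $01\to 10\to 01$ is a period-$2$ cycle both of whose flips are insensitive to $a$ and $b$, any update mode that updates $u_0$ and $u_1$ \emph{only simultaneously} and an \emph{even} number of times per period (a form of local function repetition over a period) brings the pair back to its initial value, while the left cell exposes to its left only the fixed sequence $0,1,0,\dots$ and the right cell exposes to its right only $1,0,1,\dots$. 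These exposed sequences are the same whatever lies on either side, so the two regions flanking $u$ evolve as decoupled closed systems with a fixed boundary clock: this is precisely the information-blocking (wall) property. Such modes exist within the $\bp$ and $\lc$ families (and, by inclusion, are where the superpolynomial entries of Table~\ref{table:rules_and_updatemodes} will originate), whereas they fail under $\p$, under $\bip$ and, more generally, whenever $u_0,u_1$ are updated asynchronously.

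Finally, to obtain that $01$ and $10$ are the \emph{only} walls, I would argue that no word with two equal cells can block information under any mode: by the computation above a cell of $00$ (resp.\ $11$) updates to the current value of its outer neighbour, so the first time either boundary cell is updated it copies in a value from outside the word, and the two sides can never be decoupled. Hence $00$ and $11$ are walls for no update mode at all, leaving $01$ and $10$ as the two relative walls of $f_{178}$.

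I expect the delicate point to be the second paragraph. Unlike the absolute wall of rule $156$, the wall here oscillates rather than staying fixed, so the argument must (i) single out the correct sub-family of update modes, namely those realising an even number of \emph{simultaneous} updates of the boundary pair per period and never splitting them apart, and (ii) verify that the oscillation each cell presents to its side is a fixed, configuration-independent sequence so that information is genuinely blocked. Getting this coupling/decoupling argument right, rather than the table bookkeeping, is the crux.
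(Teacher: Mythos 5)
Your proposal is correct and follows essentially the same route as the paper's proof: reading the behaviour of each length-$2$ word off the transition table, ruling out $00$ and $11$ because their boundary cells copy the outer neighbour, and observing that $01$ and $10$ swap into each other under simultaneous updates independently of their surroundings, so that update modes keeping the pair synchronized realise them as relative (but not absolute) walls. Your explicit insistence on an even number of simultaneous updates per period, and on the decoupling of the two flanking regions, is a point the paper only makes precise later (in the proof of Theorem~\ref{thm:178bp}), so your version is if anything slightly more complete than the paper's own argument.
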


\begin{proof}
	Let $a, b \in \B$. 
	Notice that by definition of the rule: 
	$f(a00) \neq f(b00)$ and $f(00a) \neq f(00b)$, and 
	$f(a11) \neq f(b11)$ and  $f(11a) \neq f(11b)$ if $a \neq b$; 
	$f(a01) = 1$ and $f(01a) = 0$; and
	$f(a10) = 0$ and $((10a) = 1$.
	Thus, neither $00$ nor $01$ nor $10$ nor $11$ are absolute walls. 
	From what precedes, notice that the properties of $00$ and $11$ prevent 
	them to be relative walls. 
	Consider now the two words $u = 01$ and $v = 10$ and let us show that they 
	constitute relative walls. 
	Regardless the states of the cells surrounding $u$, every time both 
	cells of $u$ are updated simultaneously, $u$ changes to $v$ and similarly, 
	$v$ will change to $u$ independently of the states of the cells that 
	surround it, as long as both its cells are updated together.
	Thus, $u$ and $v$ are relative walls.\hfill\qed
\end{proof}

We will show certain $\bp$ and $\lc$ update modes that are able to produce 
these relative walls.

\begin{thm}
	\label{thm:178BIP}
	Each representative of $(178, \bip{})$ of size $n$ has largest limit cycles 
	of length $\Theta(n)$.
\end{thm}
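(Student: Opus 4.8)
The plan is to match a lower bound and an upper bound of order $n/2$ on the length of the largest limit cycle. I work with the representative $\mu_\bip = (B_0,B_1)$ where $B_0$ (resp. $B_1$) is the set of even-indexed (resp. odd-indexed) cells; the other representative is symmetric. The central tool is the number of \emph{interfaces} $I(x) = |\{i \in \integers{n} : x_i \neq x_{i+1}\}|$ (indices modulo $n$), i.e. the number of domain walls separating maximal blocks of equal states. First I would show that $I$ is non-increasing. Using the reformulation of rule $178$ as ``$f(a,b,c)= a \vee c$ if $b=0$, and $f(a,b,c)= a \wedge c$ if $b=1$'', a local case analysis at each updated cell $i$, whose two neighbours $a=x_{i-1}$ and $c=x_{i+1}$ are of opposite parity and hence unchanged during the substep, gives two cases: if $a=c$ then $i$ takes value $a$, so its two incident pairs contribute no interface afterwards (an annihilation when $x_i \neq a$); if $a \neq c$ then $i$ flips and the single incident interface merely moves from one side of $i$ to the other. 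Since each interface is incident to exactly one cell of a given parity class, summing over the cells updated in a substep yields that $I$ does not increase at that substep, hence $I(F(x)) \leq I(x)$.

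Next I would turn the ``$a \neq c$'' case into a ballistic description: each interface behaves as a particle whose velocity over a full step is $+2$ or $-2$ according to the parity of its position, so that a maximal block of even length is translated rigidly between collisions (I would check the $+2$/$-2$ shift on a clean isolated isle, as suggested by the space-time diagrams of Figure~\ref{fig:rule178}). A configuration all of whose blocks have even length then has all its interfaces of the same type, so one step acts on it exactly as the cyclic shift by $2$ (or by $-2$), which preserves the ``all blocks even'' property; conversely, an odd-length block carries two interfaces of opposite velocities, which approach each other at relative speed $4$ and annihilate, strictly decreasing $I$.

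For the upper bound I would use that the restriction of $F$ to its limit set is a bijection, so $I$ is constant along any limit cycle and no annihilation ever occurs on it. By the previous paragraph this forces all interfaces of a limit-cycle configuration to share the same velocity $\pm 2$, so the configuration is rigidly shifted by $\pm 2$ at each step; its period thus divides $n/\gcd(n,2)=n/2$, giving $O(n)$. For the lower bound I would exhibit a single isle of $1$s of even length (say $2$) in a torus of even size $n$: all its blocks are even, it is shifted by $2$ at every step, and having no nontrivial spatial symmetry it returns to itself exactly after $n/2$ steps, a limit cycle of length $\Omega(n)$. Combining the two bounds yields $\Theta(n)$.

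I expect the main obstacle to be the rigorous collision analysis underpinning the ballistic picture: proving that two interfaces of opposite velocity \emph{must} annihilate (equivalently, that every odd-length block disappears while strictly decreasing $I$) rather than crossing or reorganising, together with the parity bookkeeping showing that ``all blocks even'' is exactly the invariant that survives on the limit set. The non-increase of $I$ and the clean $\pm 2$ translation of even blocks are routine local computations; the delicate point is ruling out that any configuration carrying a length-$1$ (or odd-length) block can be recurrent.
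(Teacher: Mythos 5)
Your proposal is correct in substance but takes a genuinely different route from the paper. The paper proves this theorem by an exhaustive space-time case analysis: first a single isle of $1$s (four sub-cases according to the parity of its length and the parity of its starting position, yielding shrink-by-$4$, grow-by-$4$, or shift-by-$\pm 2$ behaviour), then configurations with two isles through ten further parity cases, together with the assertion that two isles capture all possible interactions; every case ends in a fixed point or a rotation of period $n/2$. You replace this catalogue with a structural argument: the interface count $I$ is a Lyapunov function under bipartite updates (your per-cell accounting is sound, since each pair $\{i,i+1\}$ contains exactly one cell of the updating parity, so each pair is charged to exactly one updated cell), the ballistic picture classifies blocks (even blocks translate rigidly by $\pm 2$, odd blocks shrink or grow by $4$ per step), and constancy of $I$ along a limit cycle rules out annihilation, forcing every recurrent configuration to be a rigid rotation, hence period dividing $n/2$; a width-$2$ isle gives the matching lower bound of exactly $n/2$. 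What your approach buys is uniformity: arbitrarily many blocks are handled at once, which is precisely the point the paper glosses over when it claims two isles ``capture all the possible behaviors''; it also cleanly separates the upper bound (invariant) from the lower bound (explicit witness). Two points need tightening when written out. First, your annihilation claim is stated too strongly: an odd block's two interfaces have opposite velocities but may be \emph{receding} (the block then grows by $4$ per step; this is the paper's $\barmu_\bip$ case, and by self-complementarity of rule $178$ it is the neighbouring $0$-blocks that shrink instead). The correct statement is that if the interfaces do not all share one velocity, then somewhere on the circle two adjacent interfaces approach each other (conservation of total length forces a shrinking block whenever some block grows), and that shrinking block --- necessarily of odd length --- annihilates. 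Second, the collision analysis you flag as the main obstacle is genuinely needed but finite and routine: velocities are constant in time because an interface's position parity and rising/falling type are preserved until collision, and a shrinking odd block reaches length $3$ or $1$ and vanishes in one further step; these are exactly the computations the paper performs explicitly (its distinction between $\ell'=3$ and $\ell'=1$ depending on $\lfloor \ell/4 \rfloor$), so your scheme closes into a complete proof.
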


\begin{proof}
	By Lemma~\ref{lem:178walls} (and its proof), $(178, \bip{})$ has no walls 
	since there is no way of updating two consecutive cells simultaneously. 
	Since configurations $0^n$ and $1^n$ are fixed points, let us consider other 
	configurations and denote by $\mu_\bip{}$ (resp. $\barmu_\bip{}$) the 
	update mode defined by $(\{i \in \integers{n} \mid i \equiv 0 \mod 2\}, 
	\{i \in \integers{n} \mid i \equiv 1 \mod 2\})$ (resp. the other one). 
	Let us begin with configurations $x$ composed by one isle $1$s.
	With no loss of generality, let us take $x = (1)^\ell(0)^r$, with $n = 
	\ell + r \equiv 0 \mod 2$ by definition of a bipartite update mode:
	\begin{itemize}
	\item If $\ell$ is odd, with $\mu_\bip{}$, we have:
		\begin{equation*}
			\begin{array}{ccl}
				x & = & (1)^\ell(0)^r\\
				x^1 & = & (0)^2(1)^{\ell - 4}(0)^{r+2}\\
				& \vdots & \\
				x^i & = & (0)^{2i}(1)^{\ell - 4i}(0)^{r + 2i}\\
				& \vdots & \\				
				x^{\lfloor \frac{\ell}{4} \rfloor} & = & \begin{cases}
						(0)^{\lfloor \frac{\ell}{2} \rfloor} (1) 
							(0)^{\lfloor \frac{\ell}{2} \rfloor} & \text{if } \ell 
							\text{ is the smallest of the two odd numbers}\\
							& \text{with equal } \lfloor \frac{\ell}{4} \rfloor\\
						(0)^{\lfloor \frac{\ell}{2} \rfloor} (1)^3 
							(0)^{\lfloor \frac{\ell}{2} \rfloor} & \text{otherwise}\\
					\end{cases}\\
				x^{\lfloor \frac{\ell}{4} \rfloor+1} & = & (0)^n\text{.}
			\end{array}
		\end{equation*}
		With $\barmu_\bip{}$, with a similar reasoning, we get the symmetric 
		result with the number of $1$s increasing by $4$ at each step.
		Consequently, such configurations lead to fixed points, either $0^n$ or 
		$1^n$.
	\item If $\ell$ is even, let us show that the isle of $1$s shifts over 
	  time to the right with $\mu_\bip{}$ and to the left with $\barmu_\bip{}$.
		With $\mu_\bip{}$, we have:
		\begin{equation*}
			\begin{array}{ccl}
				x & = & (1)^\ell(0)^r\\
				x^1 & = & (0)(2)^\ell(0)^{r-2}\\
				& \vdots & \\
				x^i & = & (0)^{2i}(1)^\ell(0)^{r-2i}\\
				& \vdots & \\
			\end{array}
		\end{equation*}
		\begin{equation*}
			\begin{array}{ccl}
				x^{\frac{n}{2}-1} & = & (1)^{\ell-2}(0)^r(1)^2\\
				x^{\frac{n}{2}} & = & (1)^\ell(0)^r\text{.}
			\end{array}
		\end{equation*}
		With $\barmu_\bip{}$, with a similar reasoning, we get the symmetric 
		result, the isle of $1$s shifting to the left.
		Consequently, such configurations lead to limit cycles of length 
		$\frac{n}{2}$.
	\end{itemize}

	Now, let us consider configurations with several isles of $1$s.
	With no loss of generality, let us focus on configurations with two isles 
	of $1$s because they capture all the possible behaviors. 
	There are $10$ distinct cases which depends on both the parity of the size 
	of the isles of $1$s and the parity of the position of the first cell of 
	each isle. 
	This second criterion coincides locally with the nature of the bipartite 
	update mode. 
	Indeed, given an isle of $1$s, if its first cell $i$ is even (resp. odd), 
	then the isle follows $\mu_\bip{}$ (resp. $\barmu_\bip{}$) locally. 
	For the sake of clarity, let us make us of the following notation: 
	we use $\alpha \in \{\even, \odd\}$ to denote the parity of the size of the 
	isles of $1$s, and $\beta \in \{\mu, \barmu\}$ to denote the local 
	bipartite update mode followed by the isles. 
	Let us proceed by case disjunction, where the cases are denoted by a pair of 
	criterion of each sort, one for the first isle of $1$s, another one for the 
	second.	
	\begin{itemize}
	\item If the two isles of $1$s are of even sizes:
		\begin{enumerate}
		\item Case $(e_\mu, e_\mu)$:
			By what precedes, because the two isles are separated by each other by 
			an even number of cells at least equal to $2$, they both shift to the 
			right with an index equal to $2$ over time. 
			Thus, such a configuration lead to a limit cycle of length 
			$\frac{n}{2}$.
		\item Case $(e_{\barmu}, e_{\barmu})$:
			This case is similar to the previous one, except that the isles shift 
			to the right.
		\item Case $(e_\mu, e_{\barmu})$:
			By what precedes, the first isle shifts to right as well as the second 
			isle shifts to the left over time, both with an index $2$.
			They do so synchronously at each time step until the isles meet. 
			Notice that the two isles are followed by an odd number of $0$s.
			So, let us consider that initial configuration 
			$x = (1)^k(0)^\ell(1)^{k'}(0)^r$, with $n = k + k' + \ell + r$.
			The two isles inevitably meet from a configuration 
			$x^t = (1)^k(0)^{\ell'}(1)^{k'}(0)^{r'}$ 
			(up to rotation of the configuration on the torus), where $k, k', \ell', 
			r', t \in \Z$ and $n = k + k' + \ell' + r'$ such that $k$ and $k'$ are 
			even and represent the size of each isle, $\ell'$ and $r'$ are odd, and 
			$\ell' = 3$ (resp. $\ell' = 1$) if the initial number of $0$s following 
			the first isle in $x^0$ is the smallest (resp. the biggest) of the odd 
			numbers with equal $\lfloor \frac{\ell}{4} \rfloor$. 
			Then, we have:
			\begin{equation*}
				\begin{array}{ccc}
					x^t & = & (1)^k(0)^{\ell'}(1)^{k'}(0)^{r'}\\
					x^{t+1} & = & (0)^2(1)^{k+k'-1}(0)^{r'+2}\text{.}
				\end{array}
			\end{equation*}
			Configuration $x^{t+1}$ is then composed of a unique isle of $1$s of 
			odd size whose first cell position is even, \ie{} this isle follows 
			$\mu_\bip{}$ locally and converges to $0^n$ as proven above.
			Remark that the case $(e_{\barmu}, e_\mu)$ is strictly equivalent 
			because of the toric nature of the ECA.
		\end{enumerate}
	\item If the two isles of $1$s are of odd sizes:
		\begin{enumerate}
		\item Case $(o_\mu, o_\mu)$:
			By what precedes, the two isles evolve locally by decreasing their 
			numbers of $1$s until $x$ becomes the fixed point $0^n$.
		\item Case $(o_\mu, o_{\barmu})$:
			Notice first that by definition, the two isles are inevitably separated 
			by even numbers $\ell \geq 2$ and $r \geq 2$ of $0$s on each side.
			Let us consider configuration $x = (1)^k(0)^\ell(1)^{k'}(0)^r$ such that 
			$n = k + k' + \ell + r$.
			By what precedes, the first isle evolves locally towards 
			$O^k$ by losing four $1$s (two from each side) at each step. 
			Conversely, the second isle increases its number of 
			$1$s by four (two more on each side) at each step. 
			Consequently, these two isles of $1$s never meet and the first isle 
			spreads it $1s$ over time until $x$ reaches $1^n$.
			Remark that the case $(o_{\barmu}, o_\mu)$ is strictly equivalent 
			because of the toric nature of the ECA.
		\item Case $(o_{\barmu},o_{\barmu})$:
			By what precedes, the two isles increases by four their number of 
			$1$s (two on each side).
			Inevitably, there exists a step $t$ at which they meet and become a 
			unique isle of $1$s such that the position of its first cell position 
			is odd, because of the parity of the increasing of $1$s, and follows 
			thus $\barmu_\bip{}$. 
			As a consequence, this isle spreads its $1$s and $x$ evolves over time 
			until it reaches $1^n$.
		\end{enumerate}
	\item If the sizes of the two isles of $1$s are of distinct parity:
		\begin{enumerate}
		\item Case $(e_\mu,o_\mu)$:
			First, notice that configuration $x$ is necessary such that there are an 
			even (resp. odd) number of $0$s which follow the first (resp. second) 
			isle. 
			So, let us consider $x = (1)^k(0)^\ell(1)^{k'}(0)^r$, with $\ell \geq 2$ 
			even and $r$ odd, such that $n = k + k' + \ell + r$. 
			By what precedes, the first isle shifts to the right with index $2$ at 
			each time step.
			The second isle reduces its its number of $1$s by four at each step 
			($two$ on each side).
			Thus, the two isles never meet: the second isle converges locally to 
			$0^{k'}$ while the first isle keeps shifting with an index $2$ over 
			time.
			Thus, $x$ reaches a limit cycle of length $\frac{n}{2}$.
		\item Case $(e_\mu,o_{\barmu})$:
			By what precedes, the first isle dynamically shifts to the right with 
			an index $2$ and the second increases its number of $1$s by four (two on 
			each side) at each step.
			Notice that configuration $x$ is necessary such that there are an odd 
			(resp. even) number of $0$s which follow the first (resp. second) isle. 
			So, let us consider that initial configuration 
			$x = (1)^k(0)^\ell(1)^{k'}(0)^r$, with $n = k + k' + \ell + r$, and 
			$\ell$ odd and $r \geq 2$ even.
			The two isles inevitably meet from a configuration 
			$x^t = (1)^k(0)^{\ell'}(1)^{k''}(0)^{r}$ (up to rotation of the 
			configuration on the torus), where $k, k'', \ell', r, t \in \Z$ and $n = 
			k + k'' + \ell' + r$ such that $k$ and $k''$ are even and represent the 
			size of each isle at step $t$, and $\ell' = 3$ (resp. $\ell' = 1$) if 
			the initial number of $0$s following the first isle in $x^0$ is the 
			smallest (resp. the biggest) of the odd numbers with equal 
			$\lfloor \frac{\ell}{4} \rfloor$. 
			Then, we have:
			\begin{equation*}
				\begin{array}{ccc}
					x^t & = & (1)^k(0)^{\ell'}(1)^{k''}(0)^{r}\\
					x^{t+1} & = & (0)^2(1)^{n-r}(0)^{r-2}\text{.}
				\end{array}
			\end{equation*}
			Configuration $x^{t+1}$ is then composed of a unique isle of $1$s of 
			even size whose first cell position is even, \ie{} this isle follows 
			$\mu_\bip{}$ locally, and shifts to the right with index $2$. 
			Thus, $x$ reaches a limit cycle of length $\frac{n}{2}$.
		\item Case $(e_{\barmu},o_\mu)$:
			Applying on this case the same reasoning as in the case 
			$(e_\mu,o_\mu)$ allows to show a dynamics which is symmetric, in the 
			sense that such a configuration evolves asymptotically towards a limit 
			cycle characterized by the first isle of $1$s shifting to the left with 
			index $2$, which confirms the reachability of a limit cycle of length 
			$\frac{n}{2}$.
		\item Case $(e_{\barmu},o_{\barmu})$:
			Applying on this case the same reasoning as in the case 
			$(e_\mu,o_{\barmu})$ allows to show a dynamics which symmetric, in the 
			sense that such a configuration evolves asymptotically towards a limit 
			cycle characterized by a unique isle of $1$s of size $n-\ell$ shifting 
			to the left with index $2$, which confirms the reachability of a limit 
			cycle of length $\frac{n}{2}$.		
		\end{enumerate}
	\end{itemize}
	All these cases taken together show that the largest limit cycle reachable 
	by $(178, \bip{})$ is of length $\frac{k}{2} = O(n)$. \qed
\end{proof}

\begin{thm}\label{thm:178bp}
  The family $(178,\bp)$ of size $n$ has largest limit cycles of length 
  $\Omega\left(2^{\sqrt{n\log n}}\right)$
\end{thm}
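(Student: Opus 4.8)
The plan is to reproduce, for block-parallel update modes, the mechanism that yields superpolynomial periods for $(156,\bip)$ in Theorem~\ref{thm:156BIP}: use persistent walls to split the torus into independent regions, equip each region with a cyclic behaviour whose period is a tunable function of its size, and then combine the regions through a least-common-multiple (primorial) argument. The crucial new ingredient, and the very reason $\bp$ modes escape the linear bound obtained for $(178,\bip)$ in Theorem~\ref{thm:178BIP}, is that block-parallel modes \emph{can} update two consecutive cells simultaneously and hence can sustain the relative walls of Lemma~\ref{lem:178walls}.

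First I would build the walls. Placing two consecutive cells in singleton subsequences forces them to be updated at every substep, hence simultaneously; by the computations in the proof of Lemma~\ref{lem:178walls} (namely $f(a01)=1$, $f(01b)=0$, $f(a10)=0$ and $f(10b)=1$), such a pair initialised to $01$ oscillates $01\leftrightarrow 10$ and its updates never depend on the surrounding cells. Each such pair is therefore a persistent relative wall that shields the two regions it separates, so that these regions evolve independently of one another.

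Next I would analyse the dynamics inside a single region. The free cells of a region of size $k$ are grouped into one or a few subsequences whose cyclic, one-cell-per-substep traversal shuttles a distinguished structure — an isle of $1$s, as in the proof of Theorem~\ref{thm:178BIP} — along the region, so that the region configuration returns to itself after a number of steps equal to a simple, essentially injective function $\ell(k)$ of the region size; the natural lever here is the subsequence length, which governs the block-parallel period $p=\lcm(|S_0|,\dots,|S_{s-1}|)$. Pinning down the exact value of $\ell(k)$ demands a case analysis on the region's evolution under the moving update head while carefully tracking the period-$2$ oscillation of the two bounding walls, and this is the step I expect to be the main obstacle, being the block-parallel analogue of the lengthy bipartite case distinction.

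Finally I would assemble the regions. Since the walls make the regions independent, the length of the global limit cycle equals the least common multiple of the individual region periods $\ell(k_1),\dots,\ell(k_m)$ (up to the period-$2$ wall oscillation, which is irrelevant to the asymptotics). Choosing the region sizes so that the values $\ell(k_i)$ are distinct primes with $\sum_i \ell(k_i)=\Theta(n)$ turns this least common multiple into a product of distinct primes, which is bounded below by the primorial $h(cn)$ for some constant $c>0$. Invoking Theorem~$18$ of~\cite{Deleglise2015}, for which $\log h(n)\sim\sqrt{n\log n}$, then yields a largest limit cycle of length $\Omega\!\left(2^{\sqrt{n\log n}}\right)$, as claimed.
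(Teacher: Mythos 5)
Your high-level strategy is exactly the paper's: sustain the relative walls of Lemma~\ref{lem:178walls} by placing the two wall cells in singleton subsequences (so they are updated at every substep and oscillate $01\leftrightarrow 10$ independently of their surroundings), use these walls to cut the torus into independent regions, and conclude with the same least-common-multiple/primorial argument as in Theorem~\ref{thm:156BIP}, invoking Theorem~18 of~\cite{Deleglise2015}. All of these ingredients are correct and coincide with the paper's proof of Theorem~\ref{thm:178bp}.

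However, the step you explicitly defer is not a routine verification; it is the heart of the proof, and as it stands there is a genuine gap. You never exhibit, for a region of $k$ free cells between two sustained walls, a block-parallel mode and an initial subconfiguration whose limit cycle length is a tunable, essentially injective function of $k$; without that, nothing feeds the primorial argument, and the linear upper bounds for $(178,\seq)$ and $(178,\bs)$ (Theorems~\ref{thm:178SEQ} and~\ref{thm:178BS}) show that long region periods cannot be taken for granted for rule $178$. Moreover, the mechanism you sketch is not the one that works: putting the free cells into one long subsequence traversed one cell per substep makes the mode's period $\lcm(1,k)=k$ substeps, so for odd $k$ the walls are not restored at the end of a step (the paper explicitly requires an even number of substeps per period for the relative walls to persist as walls), and in any case the resulting interior dynamics is unanalyzed --- note also that the mode's substep period $p=\lcm(|S_0|,\dots,|S_{s-1}|)$ is not itself the limit-cycle length, which is counted in steps, so tuning $p$ is not a substitute for analyzing the dynamics. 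The paper instead groups the free cells into consecutive pairs $(y_2,y_3),\dots,(y_k,y_{k+1})$, keeping the period at two substeps everywhere, so that the interior follows a bipartite-like dynamics while the walls flip at each substep; an isle of $1$s seeded at a wall then grows by two cells per step, fills the region, and empties it again, and a case analysis over the two wall types ($01$/$10$ at each end) and the parity of $k$ shows that the region reaches either a fixed point or a limit cycle of length exactly $k+1$, with every $k$ realizable as a cycle by matching the wall types to its parity. It is this explicit value $k+1$ that lets one choose region sizes making the $(k_i+1)$ distinct primes summing to $\Theta(n)$. So: right skeleton and right auxiliary ideas, but the central lemma is missing and your proposed route to it would need to be replaced by the paired-cells construction (or something equally concrete).
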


\begin{proof}
  Notice that the configurations of interest here are those having at least 
  one relative wall since $(0)^n$ and $(1)^n$ are fixed points. 
  By Lemma~\ref{lem:178walls} (and its proof) $w_1$ and $w_2$ are relative 
  walls for the family $(178,\bp)$ since there exist block-parallel updates 
  modes guaranteeing that the two contiguous cells carrying $w_1$ (resp. 
  $w_2$) are updated simultaneously and an even number of substeps over the 
  period. 
  Thus, let us consider in this proof an initial configuration $x$ with at 
  least one wall. 
  The idea is to focus on what can happen between two walls because the 
  dynamics of two subconfigurations delimited by two distinct pairs of walls 
  are independent from each other.\smallskip
  
  Now, let $y$ (resp. $y'$) be a subconfiguration of size $k+4$ such that 
  $y = (w_{\ell},y_2,\dots,y_{k+1},w_r)$, with 
  $w_{\ell}=y_0y_1,w_r=y_{k+2}y_{k+3}$ two relative walls in $W = \{w_1,w_2\}$ 
  and such that for all $i\in\{2,\dots,k+1\},y_i=0$ (resp. $y_i'=1$) and the 
  block-parallel mode
  \begin{equation*}
    \begin{split}
      \mu_{\bp} &= \{(y_0),(y_1),(y_2,y_3),\dots,(y_k,y_{k+1}),(y_{k+2}),
        (y_{k+3})\}\\
      &\equiv (\{y_0,y_1,y_2,y_4,\dots,y_{k},y_{k+2},y_{k+3}\},
        \{y_0,y_1,y_3,y_5,\dots,y_{k+1},y_{k+2},y_{k+3}\}),
    \end{split}
  \end{equation*}
  if $k$ is even, and:
  \begin{equation*}
    \begin{split}
      \mu_{\bp} &= \{(y_0),(y_1),(y_2,y_3),\dots,(y_k,y_{k+1}),(y_{k+2}),
        (y_{k+3})\}\\
      &\equiv (\{y_0,y_1,y_2,y_4,\dots,y_{k+1},y_{k+2},y_{k+3}\},
        \{y_0,y_1,y_3,y_5,\dots,y_{k},y_{k+2},y_{k+3}\}),
    \end{split}
  \end{equation*}    
  if $k$ is odd.\medskip
  
  The dynamics of $y$ follows four cases:
  \begin{enumerate}
  \item $w_{\ell}=w_r=01$, given $\tau\in\Z$, denoting the subconfigurations 
    obtained at a substep by $y^{\frac{\tau}{2}}$:\medskip
    
    \overfullrule=0pt
    \begin{minipage}{.48\textwidth}
      -- If $k$ is even, we have
        \begin{align*}
          y&=&(01)(0)^k(01)\\
          y^{\left(\frac{1}{2}\right)}&=&(10)(1)(0)^{k-1}(10)\\
          y^1&=&(01)(1)^2(0)^{k-3}(1)(01)\\
          y^{\left(\frac{3}{2}\right)}&=&(10)(1)^3(0)^{k-5}(1)^2(10)\\
          y^2&=&(01)(1)^4(0)^{k-7}(1)^3(01)\\
          &\vdots&\\    
          y^i&=&(01)(1)^{2i}(0)^{k-4i+1}(1)^{2i-1}(01)\\
          &\vdots&\\
          y^{\frac{k-1}{2}}&=&(01)(1)^k(01)\\
          y^{\frac{k}{2}-1}&=&(10)(1)^k(10)\\
          y^{\frac{k}{2}}&=&(01)(1)^k(01)\\                
        \end{align*}
    \end{minipage}        
    \begin{minipage}{.48\textwidth}
      -- If $k$ is odd, we have
        \begin{align*}
          y&=&(01)(0)^k(01)\\
          y^{\left(\frac{1}{2}\right)}&=&(10)(1)(0)^{k-1}(10)\\
          y^1&=&(01)(1)^2(0)^{k-2}(01)\\
          &\vdots&\\
          y^i&=&(01)(1)^{2i}(0)^{k-2i}(01)\\
          &\vdots&\\    
          y^{\frac{k+1}{2}}&=&(01)(1)^k(01)\\
          y^{\frac{k+2}{2}}&=&(10)(1)^{k-1}(0)(10)\\
          y^{\frac{k+3}{2}}&=&(01)(1)^{k-2}(0)^2(01)\\
          &\vdots&\\
          y^{k+1}&=&(01)(0)^k(01)\\                
        \end{align*}
    \end{minipage}
    Thus, subconfiguration $y$ converges towards fixed point $(01)(1)^k(01)$ 
    when $k$ is even and leads to a limit cycle of length $k+1$ when $k$ is 
    odd.
  \item $w_{\ell}=w_r=10$: taking $y'$ as the initial subconfiguration, and 
    applying the same reasoning, we can show that this case is analogous to 
    the previous one up to a symmetry, which allows us to conclude that $y'$ 
    converges towards fixed point $(10)(0)^k(10)$ when $k$ is even and leads 
    to a limit cycle of length $k+1$ when $k$ is odd.
  \item $w_{\ell}=01$ and $w_r=10$:\\[2mm]
    \begin{minipage}{.48\textwidth}
      -- If $k$ is even, we have
        \begin{align*}
          y&=&(01)(0)^k(10)\\
          y^{\left(\frac{1}{2}\right)}&=&(10)(1)(0)^{k-1}(01)\\
          y^1&=&(01)(1)^2(0)^{k-2}(10)\\
          &\vdots&\\
          y^i&=&(01)(1)^{2i}(0)^{k-2i}(10)\\
          &\vdots&\\
          y^{\frac{k}{2}}&=&(01)(1)^k(10)\\
          y^{\left(\frac{k+1}{2}\right)}&=&(10)(1)^k(01)&\\
          y^{\left(\frac{k+2}{2}\right)}&=&(01)(1)^{k-1}(0)(10)\\
          &\vdots&\\
          y^{k+1}&=&(01)(0)^k(10)\\            
        \end{align*}
    \end{minipage}
    \begin{minipage}{.48\textwidth}
      -- If $k$ is odd, we have
        \begin{align*}
          y&=&(01)(0)^k(10)\\
          y^{\left(\frac{1}{2}\right)}&=&(10)(1)(0)^{k-2}(01)\\
          y^1&=&(01)(1)^2(0)^{k-4}(1)^2(10)\\
          y^{\left(\frac{3}{2}\right)}&=&(10)(1)^3(0)^{k-6}(1)^3(01)\\
          y^2&=&(01)(1)^{4}(0)^{k-8}(1)^4(01)\\
          &\vdots&\\
          y^i&=&(01)(1)^{2i}(0)^{k-4i}(1)^{2i}(10)\\
          &\vdots&\\
          y^{\left(\frac{k-1}{2}\right)}&=&(01)(1)^k(10)\\
          y^{\frac{k}{2}}&=&(10)(1)^k(01)\\
          y^{\frac{k+1}{2}}&=&(01)(1)^k(10)\\            
        \end{align*}
    \end{minipage}
    Thus, subconfiguration $y$ leads to a limit cycle of length $k+1$ when $k$ 
    is even and converges towards fixed point $(01)(1)^k(10)$ when $k$ is odd.
  \item $w_{\ell}=10$ and $w_r=10$: taking $y'$ as the initial 
    subconfiguration, and applying the same reasoning we can show that this 
    case is analogous to the previous one up to a symmetry, which allows us to 
    conclude that $y'$ leads to a limit cycle of length $k+1$ when $k$ is even 
    and converges towards fixed point $(10)(0)^k(01)$ when $k$ is odd.
  \end{enumerate} 
  
  Since the dynamics between two pairs of distinct walls is independent of 
  each other, the asymptotic dynamics of a global configuration $x$ is a limit 
  cycle whose length equals the least common multiple of the lengths of all 
  limit cycles of the subconfigurations embedded into pairs of walls. 
  With the same argument as the one used in the proof of 
  Theorem~\ref{thm:156BIP}, we derive that the length of the largest limit 
  cycles of the family $(178,\bp)$ applied over a grid of size $n$ is 
  $\Omega\left(2^{\sqrt{n\log n}}\right)$.\hfill\qed
\end{proof}

\begin{thm}\label{thm:178SEQ}
    $(178,\seq)$ of size $n$ has largest limit cycles of length $O(n)$.
\end{thm}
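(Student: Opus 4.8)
The plan is to forget the states and follow the \emph{domain boundaries}. For a configuration $x$ put $b_i = x_i \oplus x_{i+1}$, so that $b\in\B^n$ marks the edges of the torus where the state changes and the Hamming weight of $b$ equals twice the number of isles of $1$s (the only weight‑$0$ configurations being the fixed points $0^n,1^n$, which I discard). First I would note, exactly as for Lemma~\ref{lem:178walls}, that a sequential mode never updates two cells at once, so no wall—and in fact no boundary—can ever be created. The engine of the whole argument is the effect of one substep on $b$: updating cell $i$ replaces the pair $(b_{i-1},b_i)$ by its swap $(b_i,b_{i-1})$ whenever the two bits differ or are both $0$, and by $(0,0)$ when they are both $1$. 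In particular the weight of $b$ is non‑increasing at every substep; this is the sequential analogue of Lemma~\ref{lem:156wallcreation} and plays the role of a monovariant.

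Next I would invoke periodicity. Since the boundary count cannot increase at any substep yet is restored after a full period, it must be \emph{constant} along any limit cycle, so the annihilating move $(1,1)\mapsto(0,0)$ never fires on the cycle. Consequently every substep acts on $b$ as the genuine transposition $\tau_i=(i-1,i)$ of edge‑positions in $\integers{n}$, and one whole period, taken in the order $i_0,\dots,i_{n-1}$ prescribed by the mode, acts as a single fixed permutation $\pi=\tau_{i_{n-1}}\circ\cdots\circ\tau_{i_0}$—a product of the $n$ cyclic adjacent transpositions, each used once. The limit‑cycle length is then the period of $b$ under $\pi$ (up to the factor $2$ by which $b$ determines $x$, since over one $b$‑period the free bit $x_0$ either returns or is globally complemented). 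Two facts fall out at once: the boundaries behave as \emph{non‑crossing} particles, because two of them are never exchanged and hence keep their cyclic order forever; and a parity count shows $\pi$ always has an even number of cycles. For the canonical left‑to‑right sweep one checks that $\pi$ is the $(n-1)$‑cycle on $\{0,\dots,n-2\}$ fixing $n-1$, so every $b$ returns after a number of periods dividing $n-1$, which already gives $O(n)$ in that case.

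The remaining step—bounding the order of $\pi$ on annihilation‑free configurations for an \emph{arbitrary} sweep—is the one I expect to be the main obstacle. The naive order of $\pi$ can exceed $n$: for $n=5$ and the sweep $(0,2,4,1,3)$ one gets $\pi=(0\,2)(1\,3\,4)$, of order $6$, and in general particles sitting in $\pi$‑cycles of incommensurate lengths would produce a super‑linear least common multiple. The resolution must come from non‑crossing: lifting the particle positions to the cover $\Z$, the $2b$ particles never overtake one another, so over a full cycle they all wind around the torus the \emph{same} number of times. The hard part is to make ``equal winding'' quantitative, because the isles do not translate rigidly but \emph{breathe} (for $n=6$ a single isle's width already oscillates from step to step), so individual per‑period displacements are not uniform and no elementary speed bound is available. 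The way I would close it is to show that any configuration realizing incommensurate cycle lengths is forced, at some substep, to bring two boundaries onto adjacent edges and thus trigger the forbidden $(1,1)$ swap; such a configuration therefore leaves the boundary count non‑constant and cannot lie on a limit cycle. Once the common angular drift is bounded below by a positive constant, the particles return to their edges after $O(n)$ periods, and with the factor‑$2$ recovery of $x$ from $b$ this yields that the largest limit cycles of $(178,\seq)$ have length $O(n)$.
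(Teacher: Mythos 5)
Your translation of rule $178$ into the domain-boundary picture is correct and is a genuinely different route from the paper's: writing $b_i = x_i \oplus x_{i+1}$, a sequential update of cell $i$ does act on $(b_{i-1},b_i)$ as a swap unless both bits are $1$, in which case they annihilate; hence the boundary count is a monovariant, no annihilation can fire along a limit cycle, and over the cycle each period acts as a fixed permutation $\pi$ that is a product of the $n$ adjacent edge-transpositions, each used once. All of this checks out (as do your computations of $\pi$ for the left-to-right sweep and for $n=5$ with order $(0,2,4,1,3)$), and it is a cleaner setup than the paper's. The paper instead argues directly on isles of $1$s, using the ``marker'' cells $r$ with $\mu_r > \mu_{r+1}$: a case analysis shows that isles whose endpoints are not pinned to markers vanish or merge, and that on a limit cycle every surviving isle stretches from one marker (plus one) to another and advances to the next marker at each period, so the period is at most the number $N$ of markers, hence $O(n)$.

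However, there is a genuine gap, and you name it yourself: the entire quantitative content of the theorem is the claim that annihilation-free configurations have $\pi$-orbit length $O(n)$, and this is nowhere proved. Your own example shows the claim cannot be waved through: the order of $\pi$ can exceed $n$, so some lemma must exclude particle sets spread over $\pi$-cycles of incommensurate lengths, and the sentences ``the way I would close it is to show that any configuration realizing incommensurate cycle lengths is forced \ldots to trigger the forbidden $(1,1)$ swap'' and ``once the common angular drift is bounded below by a positive constant'' are statements of what must be proven, not arguments. (Concretely, for $n=5$ with order $(0,2,4,1,3)$, particles at edges $\{0,1\}$ --- one in each $\pi$-cycle --- do annihilate, at the update of cell $3$; but establishing this for every sweep and every mixed configuration is precisely the hard combinatorial step.) This missing lemma is exactly where the paper's marker analysis does its work: pinning every boundary that survives the transient to the marker cells simultaneously rules out incommensurate orbits and yields the linear bound. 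Until you supply an argument of comparable strength --- for instance, a proof that on an annihilation-free orbit all particles lie in $\pi$-cycles of a single common length, or that after the transient every particle sits on a marker edge --- your text is a promising plan rather than a proof of the $O(n)$ bound.
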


\begin{proof}
  Let us fix $n \in \N$ and let us call $f$ the $178$-ECA rule. 
  Let us also fix a sequential update mode $\mu$. 
  We recall the notation $\mu(n) = \mu_n \in \integers{n}$ to denote the node 
  that will be updated at time $t=n$. 
  We will refer to the global rule under the update mode $\mu$ as 
  $F$.\smallskip 

  First, observe that, for the rule $178$, since $f(000) = 0$ and $f(111) = 1$ 
  then, the configurations $1^*$ and $0^*$ are fixed points for the parallel 
  update mode. 
  Thus, we have that they are also fixed points for the sequential update 
  modes, i.e $f_{\mu_{n}}(111) = 1$ and $f_{\mu_{n}}(000) = 0$ for all $n \in 
  \N$ and $F_{178}(1^*) = 1^*, F_{178}(0^*) = 0^*$.\smallskip

  Let us consider an interval $I_0 = [p,q] \subseteq \integers{n}$ and a 
  configuration $x$ such that $x_i = 1$ for all $i \in I_0$ and $x_i = 0$ 
  otherwise.
  We distinguish four different cases depending on the update mode:
  \begin{enumerate}
  \item Cell $p-1$ updates before $p$ (formally $\mu_{p-1}<\mu_p$), and $q+1$ 
    before $q$ ($\mu_q>\mu_{q+1}$).

    In that case, we have a situation as the one illustrated in the following 
    table:\smallskip
    
    \begin{center}
      \begin{tabular}{c|c|c|c|c|c|c|c|c|c|c}
        \hline
        $\dots$ & $p-2$ & $p-1$ & $p$ & $p+1$ & $\dots$ & $q-1$ & $q$ & $q+1$ 
        & $q+2$ & $\dots$\\
        \hline\hline
        $\dots$ & $0$ & $0$ & $1$ & $1$ & $\dots$ & $1$ & $1$ & $0$ & $0$ & 
        $\dots$\\
        \hline
        $\dots$ & $0$ & $1$ & $1$ & $1$ & $\dots$ & $1$ & $1$ & $1$ & $0$ & 
        $\dots$\\
        \hline
      \end{tabular}
    \end{center}\medskip
    
    We will prove by induction that in this case the state of every cell of 
    the configuration is $1$ after a finite number of steps. 
    Since the patterns $001$ and $100$ are such that $f(001) = 1$ and $f(100) 
    = 1$ then, we have that the nodes $p-1$ and $q+1$ will change their states 
    to $1$. 
    Let us define $\ell_1 = \max\{i < p : \mu_{i-1} < \mu_i \ \wedge \ \mu_i > 
    \mu_{i+1}\}$ and $r_1 = \min\{i > q : \mu_{i-1} < \mu_i \ \wedge \ \mu_i > 
    \mu_{i+1}\}$, and consider $L_1 = [\ell_1, p-1]$ and $R_1 = [q+1,r_1]$.

    \begin{center}
      \begin{tabular}{c|c|c|c|c|c|c|c|c|c|c}
        \hline
        $\ell-1$ & $\ell$ & $\dots$ & $p-1$ & $p$ & $\dots$ & $q$ & $q+1$ & 
        $\dots$ & $r$ & $r+1$\\
        \hline\hline
        $0$ & $0$ & $\dots$ & $0$ & $1$ & $\dots$ & $1$ & $0$ & $\dots$ & $0$ 
        & $0$\\
        \hline
        $0$ & $1$ & $\dots$ & $1$ & $1$ & $\dots$ & $1$ & $1$ & $\dots$ & $1$ 
        & $0$\\
        \hline
      \end{tabular}
    \end{center}

    We have that $F(x) = y$ where for $y$, there exists an interval $I_1 = L_1 
    \cup I_0 \cup R_1$ such that $I_0 \subseteq I_1$ and that $y_i = 1$ for 
    all $i \in I_1$ and $y_i = 0$ otherwise. 
    In addition, because of the assumptions of this case, we have that the 
    inclusion is strict, i.e. $|I_1| > |I_0|$, and that $\mu_{\ell_1-1} < 
    \mu_{\ell_1}$ and $\mu_r > \mu_{r+1}$.\smallskip

    Now, we are going to show that for each $k \geq 1$, there exists a 
    sequence of intervals $I_1, \hdots, I_k$ such that $I_1 \subseteq I_2 
    \subseteq \hdots \subseteq I_k$, $|I_1| < \hdots < |I_k|$, and such that 
    for each $k \in \integers{n}$, we have that:\smallskip

    \begin{enumerate}
      \item $I_{k} = [\ell_k,r_k]$ 
      \item $I_0 \subsetneq I_1 \subsetneq \dots \subsetneq I_k$.
    \end{enumerate}\smallskip
    
    We proceed by induction. 
    For $k = 1$, the base case is given by the latter construction. 
    Now assume that there exists a sequence of $I_k$ such that they have 
    properties (a) and (b).
    Then, let us define $\ell_{k+1} = \max\{i < \ell_k : \mu_{i-1} < \mu_{i} \ 
    \wedge \ \mu_{i} > \mu_{i+1}\}$ and $r_{k+1} = \min \{i > r_k : \mu_{i-1} 
    < \mu_{i} \ \wedge \ \mu_{i} > \mu_{i+1}\}$. 
    Both $\ell_{k+1}$ and $r_{k+1}$ are well defined because if 
    $\mu_{\ell_k-1} < \mu_{\ell_k}$, then $\ell_{k+1} = \ell_k-1$ and, 
    similarly, if $\mu_{r_k} > \mu_{r_k+1}$, then $r_{k+1} = r_k+1$. 
    From there, we can define $L_{k+1} = [\ell_{k+1} , \ell_k-1]$ and 
    $R_{k+1} = [r_k+1,r_{k+1}]$. 
    Indeed, if $I_{k+1} = R_{k+1} \cup I_k \cup L_{k+1}$, then 
    $I_{k+1} = [\ell_{k+1}, r_{k+1}]$, and $I_k \subsetneq I_{k+1}$.
    And because $|I_k| < |I_{k+1}|$ after $t$ iterations of $F_{178}$ ($t<n$), 
    the configuration reaches an homogeneous fixed point in which the state of 
    each cell will be equal to $1$, i.e. $F_{178}^t(x)=1^*$.

  \item $\mu_{p-1} > \mu_p$ and $\mu_q < \mu_{q+1}$. 
    This case is similar to Case~1, considering an isle of $0$s surrounded by 
    $1$s.
    \begin{center}
      \begin{tabular}{c|c|c|c|c|c|c|c|c|c|c}
        \hline
        $\dots$ & $p-2$ & $p-1$ & $p$ & $p+1$ & $\dots$ & $q-1$ & $q$ & $q+1$ 
        & $q+2$ & $\dots$\\
        \hline\hline
        $\dots$ & $0$ & $0$ & $1$ & $1$ & $\dots$ & $1$ & $1$ & $0$ & $0$ & 
        $\dots$\\
        \hline
        $\dots$ & $0$ & $0$ & $0$ & $1$ & $\dots$ & $1$ & $0$ & $0$ & $0$ & 
        $\dots$\\
        \hline
      \end{tabular}
    \end{center}
    This means that Case~2 leads always to the homogeneous fixed point $0^n$.
    
  \item $\mu_{p-1}>\mu_p$ and $\mu_q>\mu_{q+1}$.
    \begin{center}
      \begin{tabular}{c|c|c|c|c|c|c|c|c|c|c}
        \hline
        $\dots$ & $p-2$ & $p-1$ & $p$ & $p+1$ & $\dots$ & $q-1$ & $q$ & $q+1$ 
        & $q+2$ & $\dots$\\
        \hline\hline
        $\dots$ & $0$ & $0$ & $1$ & $1$ & $\dots$ & $1$ & $1$ & $0$ & $0$ & 
        $\dots$\\
        \hline
        $\dots$ & $0$ & $0$ & $0$ & $1$ & $\dots$ & $1$ & $1$ & $1$ & $0$ & 
        $\dots$\\
        \hline
      \end{tabular}
    \end{center}
    According to the analysis made for Case~1, we know that the configuration 
    gains $1$s to the right of the isle until it reaches a cell that we call 
    $r$. 
    We need to find how the dynamics will behave to the left.\smallskip
    
    If $p+1$ updates after $p$ ($\mu_p<\mu_{p+1}$), then cell $p+1$ will 
    become $0$. 
    Thus, let $r_1' \in \mathbb{N}$ be such that $r_1' = \min \{s > p : 
    \mu_{i-1} < \mu_{i} \ \wedge \ \mu_{i} > \mu_{i+1}\}$.
    Then, we will lose $1$s until we reach cell $r'$.
    \begin{center}
      \begin{tabular}{c|c|c|c|c|c|c|c|c|c|c|c}
        \hline
        $p-1$ & $p$ & $p+1$ & $\dots$ & $r'$ & $r'+1$ & $\dots$ & $q$ & $q+1$ 
        & $\dots$ & $r$ & $r+1$\\
        \hline\hline
        $0$ & $1$ & $1$ & $\dots$ & $1$ & $1$ & $\dots$ & $1$ & $0$ & $\dots$ 
        & $0$ & $0$\\    
        \hline
        $0$ & $0$ & $0$ & $\dots$ & $0$ & $1$ & $\dots$ & $1$ & $1$ & $\dots$ 
        & $1$ & $0$\\
        \hline
      \end{tabular}
    \end{center}
    Note that both $r_1$ and $r_1'$ are defined by the fact that they update 
    after the cell to their right ($\mu_{r_1} > \mu_{r_1+1}$ and $\mu_{r_1'} > 
    \mu_{r_1'+1}$). 
    Similarly, we can define $p_1 = r_1'+1$ and $q_1 = r_1$, and in turn, find 
    new $r'_2$ and $r_2$ defined by $r_2' = \min \{s > r_1' : \mu_{s-1} < 
    \mu_{s} \ \wedge \ \mu_{s} > \mu_{s+1}\}$ and $r_1 = \min \{s > r_1 : 
    \mu_{s-1} < \mu_{s}\ \wedge \ \mu_{s} > \mu_{s+1}\}$.
    Recursively, we define $r_k' = \min \{s > r_{k-1}' : \mu_{s-1} < \mu_{s} \ 
    \wedge \ \mu_{s} > \mu_{s+1}\}$ and $r_k = \min \{s > r_{k-1} : \mu_{s-1} 
    < \mu_{s} \ \wedge \ \mu_{s} > \mu_{s+1}\}$ and since by definition, 
    $\mu_{r_i'} > \mu_{r_i'+1}$ and $\mu_{r_i} > \mu_{r_i+1}$, we conclude 
    that the isle of $1$s shifts over time to the right, and because the 
    configuration is a ring, this gives rise to a cycle.
    
  \item $\mu_{p-1}<\mu_p$ and $\mu_q<\mu_{q+1}$. 
    This case is similar to Case~3, except that the isle of $1$s shifts over 
    time to the left.
  \end{enumerate}\medskip

  Now, let us denote by $\{r_i\}_{i=1}^N$, with $N \in \integers{n}$, the set 
  of cells such that $\mu_{r_i} > \mu_{r_i+1}$). 
  Using these $r_i$, we can partition the ring into sections from cell $r_i+1$ 
  to cell $r_{i+1}$, as shown on the following table:
  \begin{center}
    \begin{tabular}{cc|ccc|ccc|ccccccccccc}
      \hline
      $\dots$ & $r_{i-1}$ & $r_{i-1}+1$ & $\dots$ & $r_i$ & $r_i+1$ & $\dots$ 
      & $r_{i+1}$ & $r_{i+1}+1$ & $\dots$\\
      \hline
    \end{tabular}
  \end{center}
  If the isle of $1$s starts on a cell $p \not \in \{r_l\}_{l=1}^N$ for all $i 
  \in \{1,\dots,N\}$, and ends on a cell $q \in \{r_l\}_{l=1}^N$, then we are 
  in Case~1, and this situation can only lead to a homogeneous fixed point.
  If the isle starts on a cell $p \in \{r_l\}_{l=1}^N$, and ends on a cell $q 
  \not \in \{r_l\}_{l=1}^N$, then we are in Case~2, which also leads to a 
  homogeneous fixed point.
  If there exists $r_i, r_j \in \{r_l\}_{l=1}^N$ such that $p = r_i+1$ and $q 
  = r_j$, with $1 \leq i < j \leq N$, then we are on Case~3, and it leads to a 
  cycle. 
  As seen in Case~3, we know that we gain $1$s to the right until we reach 
  from cell $r_j$ to cell $r_{j+1}$. 
  Similarly, we know that we lose $1$s from cell $r_i+1$ to cell $r_{i+1}+1$, 
  as shown on the following graphic.
  \begin{center}
    \begin{tabular}{cc|ccc|ccc|ccc|cccccccc}
      \hline
      $\dots$ & $r_{i}$ & $r_{i}+1$ & $\dots$ & $r_{i+1}$ & $r_{i+1}+1$ & 
      $\dots$ & $r_{j}$ & $r_{j}+1$ & $\dots$ & $r_{j+1}$ & $r_{j+1}+1$ & 
      $\dots$\\
      \hline\hline
      $\dots$ & $0$ & $1$ & $\dots$ & $1$ & $1$ & $\dots$ & $1$ & $0$ & 
      $\dots$ & $0$ & $0$ & $\dots$\\
      \hline
      $\dots$ & $0$ & $0$ & $\dots$ & $0$ & $1$ & $\dots$ & $1$ & $1$ & 
      $\dots$ & $1$ & $0$ & $\dots$\\
      \hline
    \end{tabular}
  \end{center}
  It is not difficult to see that after $N-j$ iterations, the isle of $1$s 
  will have to start on $r_{i*}+1$, with $i* = i+N-j$, and to end on $r_N$ 
  ($N=j+(N-j)$). 
  Since we are working on a ring, the section after $r_N$ goes from $r_N+1$ to 
  $r_1$, meaning that at the $N-j+1$ iteration, the isle of $1$s starts on 
  $r_{i*+1}$ and ends on $r_1$.
  $j-1$ iterations after that, the isle will go from $r_i+1$ to $r_j$, 
  completing the cycle ($i = i+(N-j)+1+(j-1)$ and $j=j+(N-j)+1+(j-1)$), 
  meaning that it takes $(N-j)+1+(j-1)=N$ iterations to complete the 
  cycle.\medskip
  
  Observe that, until here, we have been working with a single isle of $1$s. 
  Let us assume now that the configuration has two isles of $1$s and let us 
  denote them by $B_1$ and $B_2$).
  Consider the following proof based on a case disjunction:
  \begin{enumerate}
  \item Let us assume that there are $i, j, k \in \{1, \dots, N\}$ such that 
    $B_1$ starts on $r_i+1$ and ends on $r_j$, while $B_2$ starts on $r_k+1$ 
    and ends on $q \not \in \{r_l\}_{l=1}^N$. 
    Because there are two isles, we know that $1 \leq i < j < k \leq N$.
    \begin{center}
      \begin{tabular}{cc|ccc|ccc|ccc|cccccccc}
        \hline
        $\dots$ & $r_{i}$ & $r_{i}+1$ & $\dots$ & $r_{j}$ & $r_{j}+1$ & 
        $\dots$ & $r_{k}$ & $r_{k}+1$ & $\dots$ & $q$ & $q+1$ & $\dots$\\
        \hline\hline
        $\dots$ & $0$ & $1$ & $\dots$ & $1$ & $0$ & $\dots$ & $0$ & $1$ & 
        $\dots$ & $1$ & $0$ & $\dots$\\
        \hline
      \end{tabular}
    \end{center}
    Since with each iteration $B_1$ can only gain $1$s up to the next cell 
    such that $\mu_r > \mu_{r +1}$, this means that at most 
    \begin{equation*}
      r_{j+1} = r_k < r_{k+1} \implies r_{j+t} = r_{k+t-1} < r_{k+t}\text{.}
    \end{equation*}
    This means that the two isles of $1$s stay separated.
    Moreover, since $q \not \in \{r_l\}_{l=1}^N$ then $B_2$ will also lose 
    $1$s to the right. 
    This means that $B_2$ will disappear without interacting with $B_1$.\\
    The case with $i, j, k \in \{1, \dots, N\}$ such that $B_2$ starts on 
    $r_i+1$ and ends on $r_j$, while $B_1$ starts on $r_k+1$ but ends on 
    $q \not \in \{r_l\}_{l=1}^N$ is similar.
    \begin{center}
      \begin{tabular}{cc|ccc|ccc|ccc|cccccccc}
        \hline
        $\dots$ & $r_k$ & $r_k+1$ & $\dots$ & $q$ & $q+1$ & $\dots$ & $r_{i}$ 
        & $r_{i}+1$ & $\dots$ & $r_{j}$ & $r_{j}+1$ & $\dots$\\
        \hline\hline
        $\dots$ & $0$ & $1$ & $\dots$ & $1$ & $0$ & $\dots$ & $0$ & $1$ & 
        $\dots$ & $1$ & $0$ & $\dots$\\
        \hline
      \end{tabular}
    \end{center}
    
  \item Let us assume that there are $i, j, k \in \{1, \dots, N\}$ such that 
    $B_1$ starts on $r_i+1$ and ends on $r_j$, while $B_2$ ends on $r_k$ but 
    starts on $p \not \in \{r_l\}_{l=1}^N$. 
    Because there are two isles, we know that $r_j<p$.
    \begin{center}
      \begin{tabular}{cc|ccc|ccc|ccc|cccccccc}
        \hline
        $\dots$ & $r_{i}$ & $r_{i}+1$ & $\dots$ & $r_{j}$ & $r_{j}+1$ & 
        $\dots$ & $p-1$ & $p$ & $\dots$ & $r_k$ & $r_k+1$ & $\dots$\\
        \hline\hline
        $\dots$ & $0$ & $1$ & $\dots$ & $1$ & $0$ & $\dots$ & $0$ & $1$ & 
        $\dots$ & $1$ & $0$ & $\dots$\\\hline
      \end{tabular}
    \end{center}
    Since cell $r_i$ is updated \emph{after} cell $r_i+1$, then when cell 
    $r_i+1$ updates, its neighborhood will be $(01*)$ and $f_{178}(01*) = 0$, 
    with $* \in \{0,1\}$, and:
    \begin{itemize}
    \item $B_1$ will lose $1$s to the left up to $r_{i+1}(\leq r_j)$,
    \item $B_1$ will gain $1$s to the right up to $r_{j+1}$ (which could 
      eventually be greater than $p$),
    \item $B_2$ will gain $1$s to the left until it reaches $B_1$, and 
    \item $B_2$ will gain $1$s to the right up to $r_{k+1}(\leq r_i)$.
    \end{itemize}
    Thus, the isles $B_1$ and $B_2$ end up forming a single isle $B'$ which 
    starts on $r_{i+t}+1$ and ends on $r_{k+t}$, after $t$ iterations.
    This is analogous to the case with $i, j, k \in \{1, \dots, N\}$ such that 
    $B_2$ starts on $r_i+1$ and ends on $r_j$, while $B_1$ ends on $r_k+1$ and 
    starts on $p \not \in \{r_l\}_{l=1}^N$.
    
  \item Let us assume that there are $i, j \in \{1, \dots, N\}$ such that 
    $B_1$ starts on $r_i+1$ and ends on $r_j$, while $B_2$ starts on $p+1$ and 
    ends on $q$, with $p, q \not \in \{r_l\}_{l=1}^N$. 
    We know that $r_j < p$. 
    \begin{center}
      \begin{tabular}{cc|ccc|ccc|ccc|cccccccc}
        \hline
        $\dots$ & $r_{i}$ & $r_{i}+1$ & $\dots$ & $r_{j}$ & $r_{j}+1$ & 
        $\dots$ & $p-1$ & $p$ & $\dots$ & $q$ & $q+1$ & $\dots$\\
        \hline\hline
        $\dots$ & $0$ & $1$ & $\dots$ & $1$ & $0$ & $\dots$ & $0$ & $1$ & 
        $\dots$ & $1$ & $0$ & $\dots$\\
        \hline
    \end{tabular}
  \end{center}
  Similarly to the previous case, $B_1$ join $B_2$ after $t$ iterations, when 
  $r_{j+t} > p'$. 
  The resulting isle $B'$ loses its $1$s to the left and to the right, meaning 
  that the dynamics leads to a fixed point.
  
  \item Let us assume that there are $i, j, k, l \in \{1,\dots,N\}$ such that 
    $B_1$ starts on $r_i+1$ and ends on $r_j$, while $B_2$ starts on $r_k+1$ 
    and ends on $r_l$ ($1 \leq i< j < k < l \leq N$). 
    \begin{center}
      \begin{tabular}{cc|ccc|ccc|ccc|cccccccc}
        \hline
        $\dots$ & $r_{i}$ & $r_{i}+1$ & $\dots$ & $r_{j}$ & $r_{j}+1$ & 
        $\dots$ & $r_k$ & $r_k+1$ & $\dots$ & $r_l$ & $r_l+1$ & $\dots$\\
        \hline\hline
        $\dots$ & $0$ & $1$ & $\dots$ & $1$ & $0$ & $\dots$ & $0$ & $1$ & 
        $\dots$ & $1$ & $0$ & $\dots$\\
        \hline
      \end{tabular}
    \end{center}
    As previously shown, $B_1$ cannot reach $B_2$ ($r_{j+t} < r_{k+t}$ because 
    $r_j < r_k$). 
    Similarly, $B_2$ cannot reach $B_1$ because if there exists an iteration 
    $t$ for which $r_{l+t} = r_{i+t}$, then $r_l = r_i$ and there could not 
    have been two separate isles to begin with. Thus, $B_1$ and $B_2$ do not 
    interact with each other.
  \end{enumerate}

  Finally, let there be a configuration $w \in \{0,1\}^n$. 
  This configuration can be written as a set of isles of $1$s denoted by 
  $B_i$, with $i \in \{1,\dots,N\}$ with $0$s in-between, such that:
  \begin{equation*}
    w = \dots 0 B_1 0 \dots 0 B_{i-1} 0 \dots 0 B_i 0 \dots 0 B_{i+1} 0 \dots 
    0 B_N0 \dots\text{.}
  \end{equation*}
  We can find four kinds of isles, which correspond to the cases studied in 
  the first step of this proof.
  
  \begin{itemize}
  \item Notice that all isles corresponding to Case~2, denoted by $B_{><}$ 
    (meaning $\mu_{p-1} > \mu_p$ and $\mu_q < \mu_{q+1}$) disappear without 
    interacting with other isles, which is why we will only consider:
    \begin{compactitem}
    \item isles of type~1 corresponding to Case~1, denoted by $B_{<>}$ 
      (meaning $\mu_{p-1} < \mu_p$ and $\mu_q > \mu_{q+1}$), 
    \item isles of type~3 corresponding to Case~3, denoted by $B_{>>}$ 
      (meaning $\mu_{p-1} > \mu_p$ and $\mu_q > \mu_{q+1}$), and 
    \item isles of type~4 corresponding to Case~4, denoted by $B_{<<}$ 
      (meaning $\mu_{p-1} < \mu_p$ and $\mu_q<\mu_{q+1}$).
    \end{compactitem}
  \item It is easy to see that if every isle is $B_{<>}$, it leads to a fixed 
    point; and if every isle is of type $B_{>>}$ or of type $B_{<<}$, then the 
    isles do not interact with each other, leading to cycles of length less 
    than $n$.
  \item If there is $B_i$ of type~3 such that $B_{i+1}$ is of type~4, then we 
    have shown that both isles disappear. 
    Thus, $B_{>>}$ and $B_{<<}$ cancel out.
  \item When an isle of type $B_{<>}$ reaches another isle, they fuse into a 
    single isle which will be of the same type as the one that was reached. 
    Note that if there is a section of the configuration such that 
    \begin{equation*}
      \dots 0 B_{>>} 0^* B_{<>} 0^* B_{<<}0\dots\text{,}
    \end{equation*}
    it does not matter which isle $B_{<>}$ reaches first, the resulting isle 
    will cancel the space.
  \item Thus, we know that the number of isles $B_{>>}$ decreases if and only 
    if the number of $B_{<<}$ also decreases.
  \item Therefore:
    \begin{compactitem}
    \item If $|B_{>>}| = |B_{<<}|$, then the configuration reaches a fixed 
      point.
    \item If $|B_{>>}|>|B_{<<}|$ or $|B_{>>}|<|B_{<<}|$, then we will reach a 
      cycle of length strictly less than $n$.\hfill\qed
    \end{compactitem}
  \end{itemize}
\end{proof}

In order to tackle the asymptotical dynamics of the family $(178,\bs)$, we 
make use of a the following lemma which shows that a subclass of block-
sequential dynamics of ECA~$178$ can be simulated thanks to a sequential 
update mode.

\begin{lemma}\label{lem:equi_seq_bs*}
  For all $\mu_a \in \bs, \mu = (B_0, \dots, B_{p-1})$ such that for all $i 
  \in \integers{n}$, if $i \in B_k$ then $i-1, i+1 \not \in B_k$ (with $k \in 
  \integers{p}$, there is an update mode $\mu_b \in \seq$ such that for all $x 
  \in \B^n$
  \begin{equation*}
    F_a(x) = F_b(x), \text{ with }
    F_a = (f, \mu_a),
    F_b = (f, \mu_b) \text{ and }
    f \text{ an ECA.}
  \end{equation*}
\end{lemma}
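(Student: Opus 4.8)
The plan is to exploit the hypothesis on $\mu_a$ in its geometric form: since $f$ is an ECA, the neighbors of cell $i$ are exactly $i-1$ and $i+1$, so the condition ``$i \in B_k \Rightarrow i-1, i+1 \notin B_k$'' says that each block $B_k$ is an independent set on the torus. This is precisely the situation in which the parallel update of a block coincides with updating its cells one at a time, in any order, because no cell of $B_k$ is read by the update of another cell of $B_k$. Once this order-independence is established, the sequential mode $\mu_b$ is obtained by simply unfolding each block into a list of singletons.

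First I would construct $\mu_b$. Since $(B_0, \dots, B_{p-1})$ is an ordered partition of $\integers{n}$, I fix for each block $B_k$ an arbitrary ordering of its cells and concatenate these lists following the block order $B_0, B_1, \dots, B_{p-1}$. As the blocks are pairwise disjoint and cover $\integers{n}$, the resulting sequence lists every cell exactly once, hence it is a permutation $\phi(\integers{n})$ and $\mu_b = (\phi(\integers{n})) \in \seq$ is a legitimate sequential update mode. Note that $\mu_a$ has period $p$ and $\mu_b$ has period $n$, but the statement only demands equality of the one-period global maps $F_a$ and $F_b$, so the discrepancy of periods is irrelevant.

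The core step is the claim that, for each block $B_k = \{i_1, \dots, i_m\}$ in the chosen order, the parallel block map factorizes as a composition of single-cell maps:
\[
  f_{B_k} = f_{\{i_m\}} \circ \dots \circ f_{\{i_1\}}.
\]
To prove this I would check the equality coordinate by coordinate on an arbitrary $x \in \B^n$. A cell $j \notin B_k$ is left unchanged by both sides, since single-cell maps only alter the cell they update. For $j = i_t \in B_k$, the parallel side outputs $f(x_{i_t-1}, x_{i_t}, x_{i_t+1})$; on the sequential side, when cell $i_t$ is processed its own state is still the original $x_{i_t}$ (it is updated exactly once), and its neighbors $i_t-1, i_t+1$ have not been touched by the earlier updates of $i_1, \dots, i_{t-1}$ precisely because the hypothesis forbids $i_t-1$ and $i_t+1$ from lying in $B_k$. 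Hence the sequential side reads the same neighborhood and outputs the same value. This is the one place where the independent-set hypothesis is indispensable.

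Finally, recalling from the trajectory definition that $F_a = f_{B_{p-1}} \circ \dots \circ f_{B_0}$ and that $F_b$ is the composition of the single-cell maps taken in the concatenated order, I would substitute the per-block factorization into $F_a$ to obtain exactly $F_b$, yielding $F_a(x) = F_b(x)$ for all $x \in \B^n$. The argument is essentially bookkeeping; the only point requiring genuine care---and thus the main, though mild, obstacle---is making the order-independence claim fully rigorous, i.e.\ verifying that no single-cell update inside a block ever modifies a neighbor read by a later update of the same block, which is exactly what the condition on $\mu_a$ guarantees.
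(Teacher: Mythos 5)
Your proposal is correct and follows essentially the same route as the paper's proof: unfold each block into singletons in block order to build $\mu_b$, then use the non-adjacency hypothesis to show that updating a block's cells sequentially reads exactly the same neighborhoods as updating them in parallel. Your coordinate-by-coordinate verification of the factorization $f_{B_k} = f_{\{i_m\}} \circ \dots \circ f_{\{i_1\}}$ is in fact a slightly more rigorous rendering of the paper's informal ``they do not interact'' argument.
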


\begin{proof}
  Let $B_k = \{b_0^{B_k}, b_1^{B_k}, \dots, B_{j_k}^{B_k}\}$, with $k \in 
  \integers{p}$.
  Because of the hypothesis, $i \in \integers{n}$ cannot be updated at the 
  same substep as either of its neighbors. 
  In other words, for all $k \in \integers{p}$, this means that automata of 
  block $B_k$ do not interact with each other and thus, that $B_k$ can be 
  subdivided into as many subsets as its cardinal, each subset being composed 
  of one automaton of $B_k$. 
  From there, we can define
  \begin{equation*}
    \mu_b = \left(
      b_0^{B_0}, \dots, b_{j_0}^{B_0}, \dots, b_1^{B_k}, \dots, b_{j_k}^{B_k}, 
      \dots, b_1^{B_{p-1}}, \dots, b_{j_{p-1}}^{B_{p-1}}
    \right)\text{.}
  \end{equation*}
  By definition, after the first substep of $\mu_a$, only the automata 
  belonging to block $B_0$ are updated, and after the $j_0$th substep, the 
  same cells will have been updated for $\mu_b$. 
  Since none of the cells belonging to $B_0$ are neighbors, the fact that they 
  are updated one after the other instead of all of them at the same substep 
  does not change the result.\smallskip
  
  Recursively, we can see that with $\mu_a$, for all $k \in \integers{p}$, all 
  cells belonging to the sets $B_0, \dots, B_{k-1}$ have been updated at the 
  $k$th substep, while at the $\hat{j_k}$th substep, the same automata will 
  have been updated with $\mu_b$, with $\hat{j_k} = \sum_{\ell=0}^{k-1} 
  j_{\ell}$.
  This leads to the conclusion that for all $x \in \B^n,\ F_a(x) = 
  F_b(x)$.\hfill\qed
\end{proof}

\begin{thm}\label{thm:178BS}
  $(178,\bs)$  applied over a grid of size $n$  has largest limit cycles of 
  length $O(n)$.
\end{thm}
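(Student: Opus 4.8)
The plan is to reduce every block-sequential dynamics of rule $178$ to a collection of sequential dynamics on sub-intervals and then invoke Theorem~\ref{thm:178SEQ}. Fix $\mu_a = (B_0, \dots, B_{p-1}) \in \bs{}$ and call a position $i \in \integers{n}$ a \emph{wall site} if $i$ and $i+1$ belong to the same block. If there are no wall sites, then no two neighbours are ever updated simultaneously, so Lemma~\ref{lem:equi_seq_bs*} yields a sequential mode $\mu_b$ with $F_a = F_b$, and Theorem~\ref{thm:178SEQ} gives limit cycles of length $O(n)$. Thus I would concentrate on the case where wall sites exist.

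First I would pin down where relative walls can live. By Lemma~\ref{lem:178walls}, the words $01$ and $10$ are \emph{relative} walls: they are preserved (flipping $01 \leftrightarrow 10$) exactly when their two cells are updated simultaneously, and destroyed otherwise. In a block-sequential mode a pair of neighbours is updated simultaneously \emph{if and only if} it sits on a wall site; hence the only persistent relative walls of the dynamics occur at wall sites. Since the wall sites are fixed by $\mu_a$ (independently of the configuration), they partition the torus into segments whose interiors contain no same-block neighbour pair. Restricting $\mu_a$ to such a segment satisfies the hypothesis of Lemma~\ref{lem:equi_seq_bs*}, so on each segment the block-sequential evolution coincides with a sequential one, driven by the boundary values carried by the two delimiting walls.

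The decisive structural difference with the block-parallel case is the behaviour of these walls. In $\bs{}$ every cell is updated exactly once per step, so a wall at a wall site flips $01 \leftrightarrow 10$ once per step and therefore has period $2$; it never returns to itself \emph{within} a single step, contrary to the singleton walls used in the proof of Theorem~\ref{thm:178bp} (which are updated an even number of times per period and thus present a constant boundary to their segment). A segment therefore sees an alternating boundary rather than a constant one, and I would argue, by adapting the isle-tracking case analysis of Theorem~\ref{thm:178SEQ} to a bounded interval with forced boundaries, that isles of $1$s cannot sustain a nontrivial independent rotation inside a finite segment: they drift to a delimiting wall and are absorbed, so each segment settles (up to the period-$2$ wall flip) after $O(\mathrm{length})$ steps.

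The main obstacle is precisely to rule out the product-of-segments blow-up that produces the primorial, hence the $\Omega(2^{\sqrt{n\log n}})$ bound, in Theorem~\ref{thm:178bp}. Concretely I must show that the global limit-cycle length is not the least common multiple of independent per-segment periods but stays linear: the flipping walls prevent each bounded segment from carrying its own long cycle, so the only genuinely periodic ingredients are the uniform period-$2$ wall flips together with at most one global drift of an isle around the ring, of length at most $n$. Combining these gives a global period that is $O(n)$, establishing the claim.
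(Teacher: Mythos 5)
Your overall architecture matches the paper's first move: split $\bs{}$ modes into those where no two neighbouring cells share a block (where Lemma~\ref{lem:equi_seq_bs*} plus Theorem~\ref{thm:178SEQ} give $O(n)$, exactly as the paper argues) and those with at least one such pair. But for the second case your proof has a genuine gap at precisely the point you yourself flag as ``the main obstacle''. You assert that a segment bounded by two flipping walls cannot carry a long independent cycle --- that isles inside it are absorbed and the segment locks onto the period-$2$ wall flip --- but you never show why. This claim is the entire content of the theorem in that case: it is exactly what distinguishes $\bs{}$ from $\bp{}$, where the same segment decomposition \emph{does} produce independent per-segment cycles of length $k+1$ and hence the superpolynomial lower bound of Theorem~\ref{thm:178bp}. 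Saying you ``would adapt the isle-tracking analysis of Theorem~\ref{thm:178SEQ}'' does not fill the hole, because that analysis is built around boundaries that are absent or static, not alternating ones. The paper closes this gap with a concrete case analysis around a pair $s,s+1$ updated simultaneously (the four orderings of $\mu_{s-1}$ vs.\ $\mu_s$ and $\mu_{s+1}$ vs.\ $\mu_{s+2}$): the flipping pair emits isles of $1$s travelling in opposite directions around the ring, counter-moving isles annihilate each other, and consequently every limit cycle in this situation has length at most $2$ --- a conclusion much stronger than your $O(n)$, obtained without ever needing your per-segment locking claim.

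A second, smaller issue: your segment decomposition is only well defined when the configuration actually carries a wall ($01$ or $10$) at every wall site. If a wall site holds $00$ or $11$, its two cells are still updated simultaneously but nothing blocks information there (e.g.\ $f_{178}(100)=1$), so the torus does not split into independent segments. You would additionally need to argue that along any limit cycle either the configuration is homogeneous or each wall site eventually carries a flipping wall; the paper handles this by observing that any isle reaching the pair $s,s+1$ recreates the situation already analysed. Finally, your closing accounting is internally inconsistent: if every segment really locked onto the period-$2$ flip, the global period would be $2$ (an lcm of $2$s), whereas a ``global drift of an isle around the ring'' can only occur when there are no walls at all, i.e.\ in the case already dispatched by Theorem~\ref{thm:178SEQ}. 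Conflating the two regimes does not break the $O(n)$ bound, but it signals that the two cases were never cleanly separated in the argument.
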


\begin{proof}
    We will divide the block-sequential update modes in two groups: 
    \begin{enumerate}
    \item[A)] those where there are at least two consecutive cells are updated 
      simultaneously, and
    \item[B)] those where there are not.
    \end{enumerate}
     More formally, considering $\bs{}$ modes of period $p$ with blocks $B_0, 
     \dots, B-{p-1}$, we distinguish the set
     \begin{itemize}
     \item $\mathcal{A} = \{\mu \in \BS: k \in \integers{p}, forall i \in 
       \integers{n-1}, \ i \in B_k \implies i-1, i+1 \notin B_k\}$, and 
     \item $\mathcal{B} = \BS{} \setminus \mathcal{A} = \{\mu \in \BS: k \in 
       \integers{p}, \exists i \in \integers{n},\ i \in B_k \implies i+1 \in 
       B_k\}$.
     \end{itemize}

    From Lemma~\ref{lem:equi_seq_bs*}, we know that each update mode of 
    $\mathcal{A}$ can be written as a sequential update mode, and from 
    Theorem~\ref{thm:178SEQ}, we already know that family $(178, \mathcal{A})$ 
    has largest limit cycles of length $O(n)$.\smallskip
    
    Thus, we will turn our attention to the dynamics of ECA $178$ induced by 
    $\mathcal{B}$, composed of update modes in which there are at least two 
    consecutive cells that belong to the same block.
    Let $\{s_i\}_{i=1}^{N_s}$ be the set of cells that are updated at the same 
    time as their right-side neighbor ($\mu_{s_i} = \mu_{s_{i}+1}$).
    First, let us assume that only one cell $s$ belongs to 
    $\{s_i\}_{i=1}^{N_s}$ ($N_s = 1$) and that the initial configuration is 
    $x = 10^{n-1}$, where only cell $s$ is at state $1$, such that:
    \begin{center}
      \begin{tabular}{c|c|c|c||c|c||c|c|c|c}
        \hline
        $\dots$ & & $\dots$ & $s-1$ & $s$ & $s+1$ & $s+2$ & $\dots$ & & 
        $\dots$\\
        \hline\hline
        $\dots$ & $0$ & $\dots$ & $0$ & $1$ & $0$ & $0$ & $\dots$ & $0$ & 
        $\dots$\\
        \hline
    \end{tabular}
  \end{center}
   From the definition of the rule, we know that after the first iteration, 
   cell $s$ must become $0$, while $s+1$ must be $1$; to know what happens on 
   the rest of the configuration, similarly to the proof of 
   Theorem~\ref{thm:178SEQ}, we must proceed with a case 
   disjunction.
   
  \begin{itemize}
  \item $\mu_{s-1}<\mu_s$ and $\mu_{s+1}>\mu_{s+2}$.
    From previous analysis, we know that we gain $1$s to the left until a cell 
    $\ell$ such that $\mu_{\ell_1} < \mu_\ell$ and $\mu_\ell < \mu_{\ell+1}$, 
    and we gain $1$s to the right until a cell $r$ such that $\mu_{r-1} < 
    \mu_r$ and $\mu_r > \mu_{r+1}$, which in this case happens to be $s+1$.
    This means that after the first iteration, we will have two new isles of 
    $1$s, each moving in opposite directions. 
    We know that once the isles of $1$s circumnavigate the ring they will 
    eventually meet and, since they're moving in opposite directions, they 
    will cancel each other out.
    \begin{center}
      \begin{tabular}{cc|ccc|ccc||c|c||ccc|cccc|cc}
        \hline
        $\dots$ & $\ell'-1$ & $\ell'$ & $\dots$ & $\ell-1$ & $\ell$ & $\dots$ 
        & $s-1$ &$s$ & $s+1$ & $s+2$ & $\dots$ & $r'$ & $r'+1$ & $\dots$\\
        \hline\hline
        $\dots$ & $0$ & $0$ & $\dots$ & $0$ & $0$ & $\dots$ & $0$ & $1$ & $0$ 
        & $0$ & $\dots$ & $0$ & $0$ & $\dots$\\
        \hline
        $\dots$ & $0$ & $0$ & $\dots$ & $0$ & $1$ & $\dots$ & $1$ & $0$ & $1$ 
        & $0$ & $\dots$ & $0$ & $0$ & $\dots$\\
        \hline
      \end{tabular}
    \end{center}
    Note that, because $\mu_s = \mu_{s+1}$, the state of cell $s$ returns to 
    $1$ at the next iteration, as well as that of $s+1$ will returns to $0$. 
    Since cells $s$ and $s+1$ will be $10$ and $01$ alternatively, the 
    dynamics will have a limit cycle of length $2$.
    \begin{center}
      \begin{tabular}{cc|ccc|ccc||c|c||ccc|cccc|ccc}
        \hline
        $\dots$ & $\ell'-1$ & $\ell'$ & $\dots$ & $\ell-1$ & $\ell$ & $\dots$ 
        & $s-1$ & $s$ & $s+1$ & $s+2$ & $\dots$ & $r'$ & $r'+1$ & $\dots$\\
        \hline\hline
        $\dots$ & $0$ & $0$ & $\dots$ & $0$ & $0$ & $\dots$ & $0$ & $1$ & $0$ 
        & $0$ & $\dots$ & $0$ & $0$ & $\dots$\\
        \hline
        $\dots$ & $0$ & $0$ & $\dots$ & $0$ & $1$ & $\dots$ & $1$ & $0$ & $1$ 
        & $0$ & $\dots$ & $0$ & $0$ & $\dots$\\
        \hline
        $\dots$ & $0$ & $1$ & $\dots$ & $1$ & $0$ & $\dots$ & $0$ & $1$ & $0$ 
        & $1$ & $\dots$ & $1$ & $0$ & $\dots$\\
        \hline
        $\dots$ & $1$ & $0$ & $\dots$ & $0$ & $1$ & $\dots$ & $1$ & $0$ & $1$ 
        & $0$ & $\dots$ & $0$ & $1$ & $\dots$\\
      \hline
      \end{tabular}
    \end{center}
    
  \item $\mu_{s-1} > \mu_s$ and $\mu_{s+1} < \mu_{s+2}$. 
    Similarly to what we have already established, since $\mu_{s+1} < 
    \mu_{s+2}$, we know that we have to gain $1$s to the right until a cell 
    $r$ (which cannot be $s+1$), meaning that the new isle of $1$s will go 
    from $s+1$ to $r$. 
    Moreover, because after the next iteration of the rule, the state of $s$ 
    will once again be $1$, there will also be an isle of $1$s moving to the 
    left, which will go from $\ell$ to $s$.
    \begin{center}
      \begin{tabular}{cc|ccc|cccc||cccc|cccc|ccc}
        \hline
        $\dots$ & $\ell'-1$ & $\ell'$ & $\dots$ & $\ell-1$ & $\ell$ & $\dots$ 
        & $s-1$ & $s$ & $s+1$ & $s+2$ & $\dots$ & $r$ & $r+1$ & $\dots$\\
        \hline\hline
        $\dots$ & $0$ & $0$ & $\dots$ & $0$ & $0$ & $\dots$ & $0$ & $1$ & $0$ 
        & $0$ & $\dots$ & $0$ & $0$ & $\dots$\\
        \hline
        $\dots$ & $0$ & $0$ & $\dots$ & $0$ & $0$ & $\dots$ & $0$ & $0$ & $1$ 
        & $1$ & $\dots$ & $1$ & $0$ & $\dots$\\
        \hline
        $\dots$ & $0$ & $0$ & $\dots$ & $0$ & $1$ & $\dots$ & $1$ & $1$ & $0$ 
        & $0$ & $\dots$ & $0$ & $1$ & $\dots$\\
        \hline
        $\dots$ & $0$ & $1$ & $\dots$ & $1$ & $0$ & $\dots$ & $0$ & $0$ & $1$ 
        & $1$ & $\dots$ & $1$ & $0$ & $\dots$\\
        \hline
      \end{tabular}
    \end{center}
    Just like in the previous case, we now have two isles of $1$s moving in 
    opposite directions, destined to cancel each other out, while at the same 
    time $s$ and $s+1$ are locked in a limit cycle of length $2$.
    
  \item $\mu_{s-1} > \mu_s$ and $\mu_{s+1} > \mu_{s+2}$. 
    This case is a combination of the previous two. 
    To the right, we must repeat the analysis of Case~1, and to the left we 
    must repeat the analysis of Case~2.
    This means that once again we have a dynamics that ends into a limit cycle 
    of length $2$.
    \begin{center}
      \begin{tabular}{cc|ccc|cccc||c||ccc|cccc|ccc}
        \hline
        $\dots$ & $\ell'-1$ & $\ell'$ & $\dots$ & $\ell-1$ & $\ell$ & $\dots$ 
        & $s-1$ & $s$ & $s+1$ & $s+2$ & $\dots$ & $r$ & $r+1$ & $\dots$\\
        \hline\hline
        $\dots$ & $0$ & $0$ & $\dots$ & $0$ & $0$ & $\dots$ & $0$ & $1$ & $0$ 
        & $0$ & $\dots$ & $0$ & $0$ & $\dots$\\
        \hline
        $\dots$ & $0$ & $0$ & $\dots$ & $0$ & $0$ & $\dots$ & $0$ & $0$ & $1$ 
        & $0$ & $\dots$ & $0$ & $0$ & $\dots$\\
        \hline
        $\dots$ & $0$ & $0$ & $\dots$ & $0$ & $1$ & $\dots$ & $1$ & $1$ & $0$ 
        & $1$ & $\dots$ & $1$ & $0$ & $\dots$\\
        \hline
        $\dots$ & $0$ & $1$ & $\dots$ & $1$ & $0$ & $\dots$ & $0$ & $0$ & $1$ 
        & $0$ & $\dots$ & $0$ & $1$ & $\dots$\\
        \hline
      \end{tabular}
    \end{center}
    
  \item Case 4: $\mu_{s-1}<\mu_s$ and $\mu_{s+1}<\mu_{s+2}$. 
    This case is symmetric to Case~3.
    \begin{center}
      \begin{tabular}{cc|ccc|ccc|c|cccc|cccc|ccc}
        \hline
        $\dots$ & $\ell'-1$ & $\ell'$ & $\dots$ & $\ell-1$ & $\ell$ & $\dots$ 
        & $s-1$ & $s$ & $s+1$ & $s+2$ & $\dots$ & $r$ & $r+1$ & $\dots$\\
        \hline\hline
        $\dots$ & $0$ & $0$ & $\dots$ & $0$ & $0$ & $\dots$ & $0$ & $1$ & $0$ 
        & $0$ & $\dots$ & $0$ & $0$ & $\dots$\\    
        \hline
        $\dots$ & $0$ & $0$ & $\dots$ & $0$ & $1$ & $\dots$ & $1$ & $0$ & $1$ 
        & $1$ & $\dots$ & $1$ & $0$ & $\dots$\\
        \hline
        $\dots$ & $0$ & $1$ & $\dots$ & $1$ & $0$ & $\dots$ & $0$ & $1$ & $0$ 
        & $0$ & $\dots$ & $0$ & $1$ & $\dots$\\  
        \hline
        $\dots$ & $1$ & $0$ & $\dots$ & $0$ & $1$ & $\dots$ & $1$ & $0$ & $1$ 
        & $1$ & $\dots$ & $1$ & $0$ & $\dots$\\
        \hline
      \end{tabular}
    \end{center}
  \end{itemize}

  Morevover, if there is an isle of $1$s moving from the left to the right (or 
  from the right to the left), as soon as it reaches cell $s$, the 
  configuration will turn to the one that we have already analyzed.
  \begin{center}
    \begin{tabular}{cc|ccc|c|cccccc}
      \hline
      $\dots$ & $r_j$ & $r_j+1$ & $\dots$ & $r_i$ & $s$ & $s+1$ & $s+2$ & 
      $\dots$ & $\dots$\\
      \hline\hline
      $\dots$ & $0$ & $1$ & $\dots$ & $1$ & $0$ & $0$ & $0$ & $\dots$ & $0$ & 
      $\dots$\\
      \hline
      $\dots$ & $0$ & $0$ & $\dots$ & $0$ & $1$ & $0$ & $0$ & $\dots$ & $0$ & 
      $\dots$\\
      \hline
    \end{tabular}
  \end{center}
  Furthermore, this continues to hold when there is a set of cells 
  $\{s_i\}_{i=1}^{N}$ such that $\mu_s = \mu_{s+1}$, because we know that 
  isles of $1$s which travel in opposite directions will cancel each other 
  out.
  This means that for every block-sequential update mode such that there are 
  (at least) two consecutive cells that update on the same sub-step, the 
  longest possible limit cycle is of length $2$.\hfill\qed
\end{proof}

\begin{thm}\label{thm:178PAR}
  $(178,PAR)$ has largest limit cycles of length 2.
\end{thm}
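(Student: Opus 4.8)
The plan is to observe that ECA rule $178$ under the parallel update mode is, after the standard threshold encoding, a \emph{symmetric} threshold (neural) network, and then to run the classical decreasing-energy argument for such networks. First I would check that
\[
  f_{178}(x_{i-1},x_i,x_{i+1}) = 1 \iff x_{i-1} - x_i + x_{i+1} \geq 1 ,
\]
which follows by inspecting the eight columns of the transition table (the rule is outer-totalistic, so it suffices to treat the three possible neighbour-sums for each value of the central cell). Equivalently, let $W$ be the circulant $n \times n$ matrix with $W_{i,i-1}=W_{i,i+1}=1$, $W_{i,i}=-1$ and all other entries $0$, and let $\theta=(1,\dots,1)$; then the global parallel map $F=(178,\p{})$ is exactly $F(x)_i = \mathbf{1}\big[(Wx)_i \geq \theta_i\big]$. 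The crucial structural point is that $W$ is symmetric.

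With this encoding, I would invoke the classical result on parallel iteration of symmetric threshold networks~\cite{Goles1990}, which states that the only attractors are fixed points and limit cycles of length $2$. The mechanism is the operator energy
\[
  E(x^t,x^{t+1}) = -\langle x^t, W x^{t+1}\rangle + \langle \theta,\, x^t + x^{t+1}\rangle .
\]
A short computation using $W=W^{\top}$ gives
\[
  E(x^{t+1},x^{t+2}) - E(x^t,x^{t+1}) = \sum_{i} \big(Wx^{t+1}-\theta\big)_i \big(x^t_i - x^{t+2}_i\big),
\]
and since $x^{t+2}_i = 1$ precisely when $(Wx^{t+1}-\theta)_i \geq 0$, every summand is $\leq 0$. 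Hence $E$ is non-increasing along trajectories; being integer-valued and bounded it is eventually constant, and on the limit cycle each summand vanishes, which yields $x^{t+2}=x^t$, i.e. the period divides $2$.

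To confirm that the maximal length is exactly $2$ (and not $1$), I would exhibit an explicit $2$-cycle: for even $n$ the configuration $(01)^{n/2}$ maps under $F$ to $(10)^{n/2}$ and back, because a $0$ flanked by two $1$s becomes $1$ and a $1$ flanked by two $0$s becomes $0$. Together with the upper bound this gives the $\Theta(1)$ entry of Table~\ref{table:rules_and_updatemodes}.

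The step I expect to require the most care is the tie case $(Wx^{t+1}-\theta)_i = 0$, which occurs frequently here (e.g. for the neighbourhoods $100$, $001$, $111$). Because the diagonal of $W$ is negative, deducing ``$E$ constant $\Rightarrow x^{t+2}=x^t$'' is not immediate from $\Delta E = 0$ alone; one must use the precise tie-breaking convention of $f_{178}$ (output $1$ when the weighted sum equals the threshold) together with the invertibility of $F$ on the limit cycle, and check that the cited symmetric-network theorem is stated for an arbitrary diagonal under this convention --- this is exactly the regime that produces genuine $2$-cycles rather than only fixed points. A fully self-contained alternative would be a direct combinatorial analysis of how the isles of $1$s evolve under the parallel rule, but that dynamics splits and merges isles and is markedly more cumbersome than the energy argument.
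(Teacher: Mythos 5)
Your proposal is correct, but it takes a genuinely different route from the paper. The paper obtains this theorem as an immediate by-product of its combinatorial analysis of block-sequential modes: the parallel mode is the extreme case of a mode in which consecutive cells update simultaneously, and the isle-tracking argument in the proof of Theorem~\ref{thm:178BS} (opposite-moving isles of $1$s cancel, while each pair of simultaneously updated neighbours $s,s+1$ oscillates) yields the period-$2$ bound. You instead exploit the algebraic structure of the rule: your table check is right, $f_{178}$ is exactly the threshold function $\mathbf{1}[x_{i-1}-x_i+x_{i+1}\geq 1]$, so $(178,\p{})$ is the parallel iteration of a threshold network with symmetric circulant weight matrix, and the Goles--Olivos energy argument applies --- fittingly, via the very reference \cite{Goles1990} the paper already cites. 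Your $\Delta E$ computation and sign analysis are correct, and the $2$-cycle $(01)^{n/2}\leftrightarrow(10)^{n/2}$ settles tightness (it only covers even $n$, but small odd cases such as $001\to 110\to 001$ for $n=3$ fill the gap, and the $\Theta(1)$ claim is unaffected). Concerning the tie case you flag: no invertibility of $F$ is needed. The standard closing move is that, on a limit cycle, vanishing of every term of $\Delta E$ forces any discrepancy between $x^t_i$ and $x^{t+2}_i$ to be of the form $x^t_i=0$, $x^{t+2}_i=1$ (a tie outputs $1$ under the $\geq$ convention), so each coordinate is non-decreasing along even steps, and periodicity then freezes it; this is precisely how the theorem is proved in \cite{Goles1990} for symmetric matrices with arbitrary diagonal, so your appeal to it is sound. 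Comparing the two routes: yours is self-contained, short, and arguably more rigorous than the paper's isle bookkeeping, and it explains structurally \emph{why} the parallel mode cannot do better than period $2$ (symmetry of the interactions). What the paper's approach buys is uniformity: the same combinatorial machinery handles all of $\bs{}$, exactly the regime where your energy method must fail --- the negative diagonal entry $W_{i,i}=-1$ is why sequential and block-sequential updates escape the fixed-point/$2$-cycle regime and reach cycles of linear length (Theorems~\ref{thm:178SEQ} and~\ref{thm:178BS}).
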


\begin{proof}
  Direct from the proof of Theorem~\ref{thm:178BS}.\hfill\qed
\end{proof}

\section{Discussion}
\label{sec:persp}

In this paper, we have focused in the study of two ECA rules under different 
update modes. 
These rules are the rule $156$ and the rule $178$. 
These rule selections were made subsequent to conducting numerical simulations 
encompassing a set of $88$ non-equivalent ECA rules, each subjected to diverse 
update modes. 
Rule $156$ and Rule $178$ emerged as pertinent subjects for further 
investigation due to their pronounced sensitivity to asynchronism, as 
substantiated by our simulation results.
Following this insight we have analytically shown two different behaviors 
illustrated in Table~\ref{table:rules_and_updatemodes}:
\begin{itemize}
\item Rule 156: the maximum period of the attractors changes from constant to 
  superpolynomial when a bipartite (block sequential with two blocks) update 
  modes are considered.
\item Rule 178: the same phenomenon is observed but "gradually increasing", 
  from constant, to linear and then to superpolynomial.
\end{itemize}
The obtained results suggest that it might not be a unified classification 
according to our measure of complexity (the maximum period of the attractors). 
This observation presents an open question regarding what other complexity 
measures can be proposed to classify ECA rules under different update 
modes.\smallskip

By analyzing our simulations we have found interesting observations about the 
dynamical complexity of ECA rules. 
An example that we identified in the simulations but is not studied in the 
paper is the one of the rule 184 (known also as the traffic rule). 
In this case we have shown that under the bipartite update mode there are 
only fixed points. 
This is interesting considering the fact that it is known that the maximum 
period can be linear in the size of the network for the parallel update 
mode~\cite{li1992phenomenology}. 
Thus, in this case, the rule seems to exhibit a different kind of dynamical 
behaviour compared to rule $156$ and $178$ (asynchronism tend to produce 
simpler dynamics instead of increasing the complexity) and might be 
interesting to study from theoretical standpoint. 
In addition, it could be interesting to study if other rules present this 
particular behaviour.\smallskip

Finally, from what we have been able to observe on our simulations, there 
exists a fourth class of rules where the length of the longest limit cycles 
remains constant regardless of the update mode (for example rule $4$, $8$, 
$12$, $72$, and $76$ exhibit only fixed points when tested exhaustively for 
each configuration of size at most $n=8$). 
Thus, this evidence might suggest the existence of rules that are robust with 
respect to asynchronism. 
In this sense, it could be interesting to analytically study some of these 
rules and try to determine which dynamical property makes them robust under 
asynchronism.

\paragraph{Acknowledgments} 

The authors are thankful to 
	projects ANR-18-CE40-0002 \linebreak{} ``FANs'' (MRW, SS),
	Fondecyt-ANID 1200006 (EG)
	MSCA-SE-101131549 \linebreak{} ``ACANCOS'' (EG, IDL, MRW, SS), 
	STIC AmSud 22-STIC-02 (EG, IDL, MRW, SS), and 
	for their funding.


\begin{thebibliography}{10}

\bibitem{Aracena2011}
J.~Aracena, {\'E}.~Fanchon, M.~Montalva, and M.~Noual.
\newblock {Combinatorics on update digraphs in Boolean networks}.
\newblock {\em Dicrete Applied Mathematics}, 159:401--409, 2011.

\bibitem{Chapiro1984}
D.~M. Chapiro.
\newblock {\em {Globally-asynchronous locally-synchronous systems}}.
\newblock PhD thesis, Stanford University, 1984.

\bibitem{Charron1996}
B.~Charron-Bost, F.~Mattern, and G.~Tel.
\newblock {Synchronous, asynchronous, and causally ordered communication}.
\newblock {\em Distributed Computing}, 9:173--191, 1996.

\bibitem{Culik1988}
K.~Culik~II and S.~Yu.
\newblock {Undecidability of CA classification schemes}.
\newblock {\em Complex Systems}, 2:177--190, 1988.

\bibitem{Deleglise2015}
M.~Del{\'e}glise and J.-L. Nicolas.
\newblock {On the largest product of primes with bounded sum}.
\newblock {\em Journal of Integer sequences}, 18:15.2.8, 2015.

\bibitem{Demongeot2020}
J.~Demongeot and S.~Sen{\'e}.
\newblock {About block-parallel Boolean networks: a position paper}.
\newblock {\em Natural Computing}, 19:5--13, 2020.

\bibitem{Dennunzio2013}
A.~Dennunzio, E.~Formenti, L.~Manzoni, and G.~Mauri.
\newblock {$m$-Asynchronous cellular automata: from fairness to
  quasi-fairness}.
\newblock {\em Natural Computing}, 12:561--572, 2013.

\bibitem{Dennunzio2017}
A.~Dennunzio, E.~Formenti, L.~Manzoni, G.~Mauri, and A.~E. Porreca.
\newblock {Computational complexity of finite asynchronous cellular automata}.
\newblock {\em Theoretical Computer Science}, 664:131--143, 2017.

\bibitem{Fates2005}
N.~Fat{\`e}s and M.~Morvan.
\newblock {An experimental study of robustness to asynchronism for elementary
  cellular automata}.
\newblock {\em Complex Systems}, 16:1--27, 2005.

\bibitem{Fierz2019}
B.~Fierz and M.~G. Poirier.
\newblock {Biophysics of chromatin dynamics}.
\newblock {\em Annual Review of Biophysics}, 48:321--345, 2019.

\bibitem{Goles1990}
E.~Goles and S.~Mart{\'i}nez.
\newblock {\em {Neural and automata networks: dynamical behavior and
  applications}}, volume~58 of {\em Mathematics and Its Applications}.
\newblock Kluwer Academic Publishers, 1990.

\bibitem{Goles2008}
E.~Goles and L.~Salinas.
\newblock {Comparison between parallel and serial dynamics of Boolean
  networks}.
\newblock {\em Theoretical Computer Science}, 296:247)--253, 2008.

\bibitem{Hubner2010}
M.~R. H{\"u}bner and D.~L. Spector.
\newblock {Chromatin dynamics}.
\newblock {\em Annual Review of Biophysics}, 39:471--489, 2010.

\bibitem{Ingerson1984}
T.~E. Ingerson and R.~L. Buvel.
\newblock Structure in asynchronous cellular automata.
\newblock {\em Physica D: Nonlinear Phenomena}, 10:59--68, 1984.

\bibitem{Kari1994}
J.~Kari.
\newblock {Rice's theorem for the limit sets of cellular automata}.
\newblock {\em Theoretical Computer Science}, 127:229--254, 1994.

\bibitem{Kauffman1969}
S.~A. Kauffman.
\newblock {Metabolic stability and epigenesis in randomly constructed genetic
  nets}.
\newblock {\em Journal of Theoretical Biology}, 22:437--467, 1969.

\bibitem{Kleene1956}
S.~C. Kleene.
\newblock {\em Automata studies}, volume~34 of {\em Annals of Mathematics
  Studies}, chapter Representation of events in nerve nets and finite automata,
  pages 3--41.
\newblock Princeton Universtity Press, 1956.

\bibitem{Kurka1997}
P.~K{\r{u}}rka.
\newblock {Languages, equicontinuity and attractors in cellular automata}.
\newblock {\em {Ergodic Theory and Dynamical Systems}}, 17:417--433, 1997.

\bibitem{li1992phenomenology}
Wentian Li.
\newblock Phenomenology of nonlocal cellular automata.
\newblock {\em Journal of Statistical Physics}, 68:829--882, 1992.

\bibitem{McCulloch1943}
W.~S. McCulloch and W.~Pitts.
\newblock {A logical calculus of the ideas immanent in nervous activity}.
\newblock {\em Journal of Mathematical Biophysics}, 5:115--133, 1943.

\bibitem{Noual2018}
M.~Noual and S.~Sen{\'e}.
\newblock {Synchronism versus asynchronism in monotonic Boolean automata
  networks}.
\newblock {\em Natural Computing}, 17:393--402, 2018.

\bibitem{Pauleve2022}
L.~Paulev{\'e} and S.~Sen{\'e}.
\newblock {\em {Systems biology modelling and analysis: formal bioinformatics
  methods and tools}}, chapter {Boolean networks and their dynamics: the impact
  of updates}.
\newblock Wiley, 2022.

\bibitem{Perrot2023}
K.~Perrot, S.~Sen{\'e}, and L.~Tapin.
\newblock {On countings and enumerations of block-parallel automata networks}.
\newblock arXiv:2304.09664, 2023.

\bibitem{Rios2021-phd}
M.~R{\'i}os-Wilson.
\newblock {\em {On automata networks dynamics: an approach based on
  computational complexity theory}}.
\newblock PhD thesis, Universidad de Chile \& Aix-Marseille Université, 2021.

\bibitem{Rios2021a}
M.~R{\'i}os-Wilson and G.~Theyssier.
\newblock {On symmetry versus asynchronism: at the edge of universality in
  automata networks}.
\newblock arXiv:2105.08356, 2021.

\bibitem{Robert1980}
F.~Robert.
\newblock {It{\'e}rations sur des ensembles finis et automates cellulaires
  contractants}.
\newblock {\em Linear Algebra and its Applications}, 29:393--412, 1980.

\bibitem{Robert1986}
F.~Robert.
\newblock {\em {Discrete iterations: a metric study}}, volume~6 of {\em
  Springer Series in Computational Mathematics}.
\newblock Springer, 1986.

\bibitem{Smith1971}
A.~R. Smith~III.
\newblock {Simple computation-universal cellular spaces}.
\newblock {\em Journal of the ACM}, 18:339--353, 1971.

\bibitem{Thomas1973}
R.~Thomas.
\newblock {Boolean formalization of genetic control circuits}.
\newblock {\em Journal of Theoretical Biology}, 42:563--585, 1973.

\bibitem{vonNeumann1966}
J.~von Neumann.
\newblock {\em {Theory of self-reproducing automata}}.
\newblock University of Illinois Press, 1966.
\newblock Edited and completed by A. W. Burks.

\bibitem{Wolfram1984}
S.~Wolfram.
\newblock {Universality and complexity in cellular automata}.
\newblock {\em Physica D: Nonlinear Phenomena}, 10:1--35, 1984.

\end{thebibliography}

\end{document}